\newtheorem{theorem}{Theorem}
\newtheorem{definition}{Definition}
\newtheorem{lemma}{Lemma}
\begin{document}

\title{Crowdsensing Game with Demand Uncertainties: A Deep Reinforcement Learning Approach}

\author{Yufeng~Zhan,
        Yuanqing Xia,~\IEEEmembership{Senior~Member,~IEEE,}
        Jiang Zhang,
        Ting Li, and
        Yu~Wang,~\IEEEmembership{Fellow,~IEEE}
\thanks{Yufeng Zhan, Yuanqing Xia and Jiang Zhang are with the School of Automation, Key Laboratory of Intelligent Control and Decision of Complex
Systems, Beijing Institute of Technology, Beijing 100081, P. R. China. E-mail: zhanyf1989@gmail.com (Zhan), xia\_yuanqing@bit.edu.cn (Xia),~bitzj2015@outlook.com (Zhang).}
\thanks{Ting Li and Yu Wang are with the Wireless Networking and Sensing (WiNS) Lab, Department of Computer Science, University of North Carolina at Charlotte, Charlotte, NC 28223, USA. E-mail: tli8@uncc.edu (Li), yu.wang@uncc.edu (Wang).}
}

% make the title area
\maketitle

% As a general rule, do not put math, special symbols or citations
% in the abstract or keywords.
\begin{abstract}
Currently, explosive increase of smartphones with powerful built-in sensors such as GPS, accelerometers, gyroscopes and cameras has made the design of crowdsensing applications possible, which create a new interface between human beings and life environment. Until now, various mobile crowdsensing applications have been designed, where the crowdsourcers can employ mobile users (MUs) to complete the required sensing tasks. In this paper, emerging learning-based techniques are leveraged to address crowdsensing game with demand uncertainties and private information protection of MUs. Firstly, a novel economic model for mobile crowdsensing is designed, which takes MUs' resources constraints and demand uncertainties into consideration. Secondly, an incentive mechanism based on Stackelberg game is provided, where the sensing-platform (SP) is the leader and the MUs are the followers. Then, the existence and uniqueness of the Stackelberg Equilibrium (\textbf{SE}) is proven and the procedure for computing the \textbf{SE} is given.
Furthermore,
%to protect MUs' private information,
a dynamic incentive mechanism (DIM) based on deep reinforcement learning (DRL) approach is investigated without knowing the private information of the MUs. It enables the SP to learn the optimal pricing strategy directly from game experience without any prior knowledge about MUs' information. Finally, numerical simulations are implemented to evaluate the performance and theoretical properties of the proposed mechanism and approach.
\end{abstract}

\begin{IEEEkeywords}
Incentive-aware mechanism, demand uncertainties, Stackelberg game, deep reinforcement learning
\end{IEEEkeywords}

\IEEEpeerreviewmaketitle

\section{Introduction}
\IEEEPARstart{W}{ith} the ubiquity of mobile devices such as smartphones and tablets that are equipped with multiple powerful built-in sensors including GPS, accelerometer, gyroscope, camera, etc., the mobile crowdsensing (MCS) applications which provide location based services \cite{guo2015mobile} become possible. Currently, various of MCS systems \cite{mohan2008nericell,thiagarajan2009vtrack,cheng2014aircloud} have been deployed that cover almost every aspect of our lives, including healthcare, intelligent transportation, environmental monitoring, etc.

In the MCS system that offers crowdsensing applications, the sensing-platform (SP) will recruit mobile users (MUs) at locations of interest to report sensing data. Many of existing MCS systems \cite{xiao2017online,rana2010ear} are based on the voluntary participation from MUs. However, to perform the sensing tasks, the participating MUs have to consume their own resources such as computing and communicating energy. Moreover, the MUs may face the potential privacy threats when the sensing data is submitted with own sensitive information (e.g. location tags and visiting patterns). For these reasons, the MUs would not be interested in participating in the sensing tasks unless they receive a satisfying reward to compensate their resources consumption and potential privacy breach. Therefore, it is necessary to design an effective incentive mechanism that can stimulate the MUs to participate in the crowdsensing applications. In order to achieve the maximum user participation level, large quantities of incentive-aware mechanisms \cite{zhan2018incentive,duan2017distributed,he2017exchange,yang2016incentive,zhang2014free} have been proposed by research community for the MCS systems. Notably, in real practice, the smart devices' resources such as energy are limited, and these resources need to satisfy MUs' varying demand caused by their uncertain behavior (e.g., when MUs are busy at work, their smart devices may be free. When MUs want to have entertainments, their smart devices may be occupied with few resources left). However, few of these aforementioned works take MUs¡¯ uncertain behavior into consideration. Therefore, the design of incentive mechanism for MCS game with demand uncertainties is still an open problem.
%However, most of these aforementioned past works do not consider the MUs own resource demand uncertainties. This makes the study of crowdsensing game with demand uncertainties is still an open problem.
%
%In real practice, the smart devices' resources such as energy are limited, and these resources need satisfying the MUs own demand. Due to the uncertainty of MUs' behavior (e.g., when MUs are working busy, they have little time to operate their smart devices. Otherwise, they will have time to use the smart devices for entertainment which will consume more resource and have few resources left for serving the MCS systems), hence, when design the incentive mechanism for the MCS systems much take the MUs' demand uncertainties into consideration. In this paper, we formulate the interactions between SP and MUs as a two-stage Stackelberg game.

To deal with this problem, in this paper, the interaction between SP and MUs is formulated into a two-stage Stackelberg game. As shown in Fig.~\ref{fig_1}, in Stage I, the SP as the leader of the Stackelberg game first determines and broadcasts its pricing policy. In Stage II, each MU as a follower computes his or her sensing effort based on the price offered by the SP, his or her resources constraints and demand uncertainties. The analysis in this two-stage problem is particularly challenging, as we need to characterize the SP's profit by first computing the MUs' sensing effort with demand uncertainties. Through mathematical analysis, the existence and uniqueness of the Stackelberg Equilibrium (\textbf{SE}) in this MCS game is proven and the expressions for computing the \textbf{SE} is derived. That is, the SP in Stage I has an optimal pricing strategy and the MUs in Stage II also have optimal decisions under their own demand uncertainties.

\begin{figure*}[!t]
  % Requires \usepackage{graphicx}
  \centering
  \includegraphics[width=16cm]{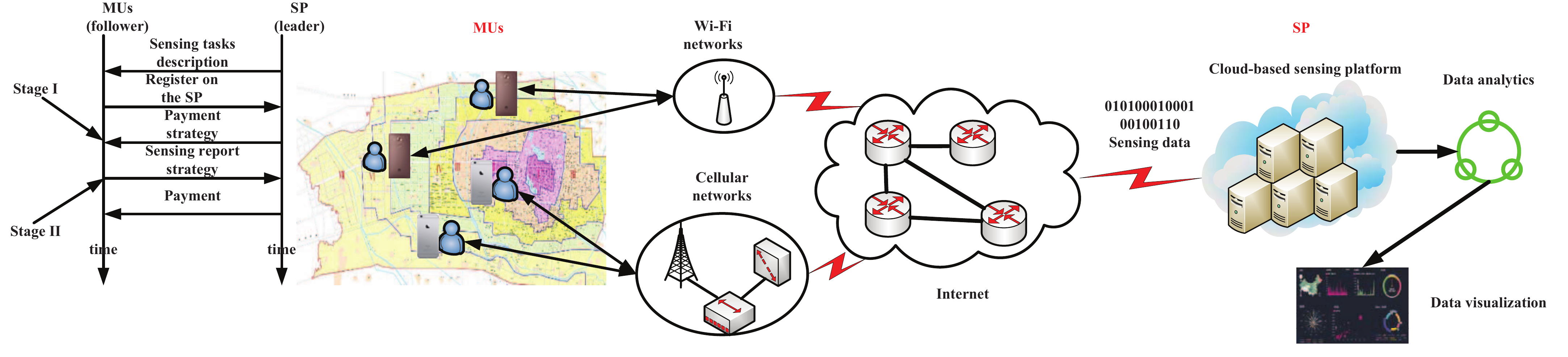}\\
  \caption{Illustration of MCS system.}\label{fig_1}
\end{figure*}

However, in order to compute the \textbf{SE} of the above static MCS game, the SP needs to know the private information of the MUs, which is impossible in lots of practical situations. To protect MUs' private information, the dynamic MCS game is modeled and dynamic incentive mechanism based on deep reinforcement learning (DRL) approaches are employed, which enable the sensing platform to learn the optimal pricing strategy directly from game experience (the past game records). Since the game experience of the SP can be regarded as a motivation for its future pricing strategy, the dynamic MCS game can be formulated into a Markov Decision Process (MDP) problem. Thus, it can be addressed by DRL algorithms effectively \cite{schulman2015trust}.
%Nevertheless, we give the analysis to compute the optimal MUs' decision under their demand uncertainties in Stage II, and give proof that there is a unique strategy for the SP in Stage I. However, in the static crowdsensing game, in order to compute the \textbf{SE} of the crowdsensing game, the SP needs to know the private information of the MUs, while this is impossible in lots of situations. \textbf{(Why we can formulate the crowdsensing game as a finite MDP? And how to compute the strategy in MDP? refer to \cite{xiao2017secure})}

Overall, the main contributions of this paper can be summarized as follows:
\begin{enumerate}
  \item A novel economic model for the MCS game with MUs' resources constraints and demand uncertainties is formulated and an incentive mechanism based on a two-stage Stackelberg game is designed.
  \item The existence and uniqueness of the \textbf{SE} in the proposed MCS game is proven and its computing procedure is provided, revealing the feasibility of allowing MCS game to cope with MUs' uncertain demand and limited resources.
  \item A dynamic incentive mechanism (DIM) based on DRL approach for the dynamic MCS game is proposed, which enables the SP to learn the optimal pricing strategy directly from game experience without any prior knowledge about MUs' private information.
      %Therefore, privacy protection of MUs in this MCS game is achieved.
  \item Numerical simulation results demonstrate the effectiveness of the proposed incentive mechanisms for both of the static MCS game and the dynamic MCS game. It is also derived that the demand uncertainties have a significant impact on MCS system performance.
  %Extensive simulation result shows that the incentive mechanism designed in this paper could obtain the optimal strategy, and the DRL-based approach could learn the optimal strategy with finite steps. And we can also derive that the demand uncertainties have a significant impact on MCS system performance.
\end{enumerate}

The rest of the paper is organized as follows. Section~\ref{sec_2} provides a literature review. Section~\ref{sec_3} presents the network economics model of the crowdsensing system. The incentive mechanism based on a two-stage Stackelberg game for the static MCS game is designed in Section~\ref{sec_4} and the DRL-based dynamic incentive mechanism for the dynamic MCS game is designed in Section~\ref{sec_5}. In Section~\ref{sec_6}, the numerical simulations are conducted to evaluate the performance of the proposed incentive mechanisms, followed by conclusions of this paper in Section~\ref{sec_7}.
%In Section~\ref{sec_4}, we present the a
%nalysis process of the static crowdsensing game. Then, based on the analysis results of the static crowdsensing game, in Section~\ref{sec_5}, we present the dynamic crowdsensing game based on DRL approach. Extensive numerical simulations are conducted to evaluate the performance of the proposed incentive mechanism in Section~\ref{sec_6}. At last, we conclude the paper in Section~\ref{sec_7}.

\section{Literature Review}\label{sec_2}
MCS has been widely studied in recent years \cite{guo2015mobile}. For example, Reddy et al. \cite{reddy2010examining} developed an application to enable sensing platform employ well-suited participants to complete sensing tasks. Xiao et al. \cite{xiao2017online} and Li et al. \cite{li2015dynamic} both studied the task allocation and participants selection problem in MCS. However, these works only focus on the user selection, task assignment or sensing data collection. They do not consider the design of incentive mechanism, which has been widely studied in lots of other fields such as spectrum trading \cite{gao2011spectrum,gao2011map}, routing \cite{ning2013self,wang2006ours}, cooperative communication \cite{chen2011conflicts,yang2012truthful}, and network security \cite{yang2013coping,xiao2015user}.

A user-centric MCS based on auction-based approach is proposed in \cite{yang2016incentive}, motivating MUs to participate in the MCS tasks. Lee and Hoh \cite{lee2010sell} proposed a reverse auction incentive mechanism based on dynamic prices in offline settings, where MUs can claim their bidding prices for the sensing data. Wen et al. \cite{wen2015quality} proposed an incentive mechanism which took the MUs' sensing quality into account, where the MUs are paid based on the quality of the sensing data instead of sensing time. Luo et al. \cite{luo2014profit} assumed that the cost distribution was known, then they designed an all-pay auction based incentive mechanism which can maximize the expected profit and meanwhile satisfied the individual rationality. Zhao et al. \cite{zhao2014crowdsource} proposed the online auction with budget constraints which applied the greedy task allocation strategy to achieve high energy efficiency with good fairness among MUs who arrived sequentially and randomly. Xu et al. \cite{xu2015incentive} proposed the incentive mechanisms for time window dependent tasks in mobile crowdsensing based on reverse auction and formulated the problem as the social optimization user selection problem.

There are also a few of studies on incentive mechanism design for MCS based on Stackelberg game. Yang et al. \cite{yang2016incentive} modeled the platform-centric incentive mechanism as a Stackelberg game. In their model, the SP has one task in a sensing slot and announces a total reward. The MUs decide their sensing strategy according to the total reward and other MUs' habits. Duan et al. \cite{duan2012incentive} used the Stackelberg game to design a threshold revenue model for the MUs. They studied two applications, data acquisition and distributed computing. For data acquisition, they took a threshold revenue model, in which a certain number of MUs are required to build the corresponding data base successfully. And for distributed computing and heterogeneous users, a contract-based mechanism had been designed to decide different task-reward combinations. Cheung et al. \cite{cheung2018delay} designed the delay-sensitive mobile crowdsensing based on Stackelberg game. In \cite{maharjan2016optimal}, Maharjan et al. proposed the multimedia application of crowdsensing based on Stackelberg game. However, these works did not take MUs' demand uncertainties into consideration, which is still a challenging problem since the MUs' resources are limited and their behaviors are uncertain in reality. A work related to this problem is \cite{zhan2018incentive}, where Zhan et al. studied the incentive mechanism design with demand uncertainties. However, \cite{zhan2018incentive} was based on one-to-many bargaining approach where the SP needed to cooperate with all the MUs. While in a free market, the SP does not know
the MUs and the MUs also do not know each other. Therefore, it is impossible to reach a partnership in a free market MCS.
%However, these MCS systems based on Stackelberg game approach all needed the private information of the participants. Xiao et al. \cite{xiao2015secure,xiao2017secure} designed the secure MCS, they still used the Stackelberg game to formulate the interaction between SP and MUs. In these two papers, they designed the Q-Learning and deep Q-Learning approaches to learn the optimal strategy of the SP and MUs, which the SP did not need know the private information of the MUs.

In addition, most of the existing MCS systems based on Stackelberg game approach require the private information of the participants. To deal with this problem, Xiao et al. \cite{xiao2017secure} designed the secure MCS, they still used the Stackelberg game to formulate the interaction between the SP and MUs. They designed the Q-Learning and deep Q-Learning approaches to learn the optimal strategy of the SP and MUs, in which the SP did not need know the private information of the MUs. It is noteworthy that their approaches can only cope with MCS game with discrete pricing strategy. When the pricing strategy is continuous, their methods are unable to work.
%Recently, both secure MCS game \cite{privacy-1,privacy-2} and learning-based participant selection \cite{learning-1} have also been studied separately.

Therefore, how to design the incentive mechanism for MCS game with MUs' demand uncertainties based on Stackelberg game in a free market and how to address private information protection problem of MUs under continuous pricing conditions become the focuses of this paper.

%However, because the MUs' resources are limited and their behaviors are uncertain. It is still challenging for the MCS system to stimulate the MUs with their own demand uncertainties. A particularly related paper to our work is \cite{zhan2018incentive}, where Zhan et al. studied the incentive mechanism design with demand uncertainties. However, \cite{zhan2018incentive} was based on one-to-many bargaining approach, in it the SP needed to cooperate with all the MUs, while in a free market, the SP did not know the MUs, also the MUs did not know each other. Therefore, it is impossible to reach a partnership in a free market crowdsensing. We have proposed the Stackelberg game approach to formulate the interactions between SP and MUs with demand uncertainties. To keep the privacy of the MUs, we also design the DRL approach to learn the optimal strategy for the SP and MUs which makes the SP does not need to know the private information of the MUs.

\section{System Model}\label{sec_3}
We consider a single SP which resides in the cloud and consists of some servers. A set $\mathcal{N}=\{1,2,\cdots,N\}$ of MUs that connect to the sensing-platform via the Internet. The sensing-platform will stimulate the mobile users to participate in the MCS tasks via rewards. More specifically, the SP's economic model is described in Section~\ref{sec_3-1} and the model of the MUs is presented in Section~\ref{sec_3-2}, followed by the problem formulation in Section~\ref{sec_3-3} finally.
%In this section, we first describe the SP's economic model in Section~\ref{sec_3-1}, and then present the model of the MUs in Section~\ref{sec_3-2} and problem formulation in Section~\ref{sec_3-3} finally.

\subsection{SP's Payoff}\label{sec_3-1}
The economic model for SP concentrates on the direct utility for the SP. That is, the utility increase only from the MUs' sensing level\footnote{Some indirect utilities include network effects which monotonically increase with the sensing level \cite{chen2016incentivizing}.}. Let $\phi(\cdot)$ denote the SP's utility. We will employ a generic utility function which is continuous, differentiable, strictly increasing and strictly concave. $\phi(\cdot)$ consists of two main characteristics of widely used utility functions \cite{duan2017distributed,he2017exchange}: i) $\phi(\cdot)$ increases with MUs' sensing level and ii) the growth rate of $\phi(\cdot)$ decreases with the MUs' sensing level increasing, i.e., $\frac{{\partial \phi (\cdot)}}{{\partial {x_i}}} \ge 0$ and $\frac{{{\partial ^2}\phi (\cdot)}}{{{\partial ^2}{x_i}}} < 0$.

Let $x_n$ denote MU $n$'s sensing resources contribution to SP, and $p_n \ge 0$ denote the SP's price to MU $n$. The sensing contribution profile and price profile are, respectively,
$$\bm{x} \buildrel \Delta \over = [x_1,x_2,\cdots,x_N]^T, \bm{p} \buildrel \Delta \over = [p_1,p_2,\cdots,p_N]^T.$$

Given $\bm{x}$ and $\bm{p}$, the total cost of the SP which is the payments to the MUs is
$$C(\bm{x},\bm{p}) = \sum\limits_{i=1}^N p_i \cdot x_i.$$

SP's payoff characterizes the gap between utility and cost, which is formulated as
\begin{eqnarray}
    U(\bm{x},\bm{p}) &=& \phi(\bm{x}) - \sum\limits_{i=1}^N p_i \cdot x_i \label{eq_3-1-1}\\
                     &=& \phi(\bm{x}) - \bm{p}^T \bm{x}. \nonumber
\end{eqnarray}
The utility function for the SP is defined as follows
\begin{equation}\label{eq_3-1-2}
    \phi(\bm{x}) = \lambda \cdot \ln(1 + \sum\limits_{i=1}^N \ln(1+x_i)),
\end{equation}
where $\lambda$ is the SP's parameter specific to the SP's sensing task. The $ln(1+x_i)$ term reflects the SP's diminishing return on the service of MU $i$, and the outer $ln$ term reflects the SP's diminishing return on the number of the MUs. This kind of utility function has been widely accepted to represent the utility of the SP in the previous works \cite{yang2016incentive,maharjan2016optimal,zhan2018incentive}.

For analyzing conveniently in the following part, we set
\begin{equation}\label{eq_3-1-3}
    b = 1 + \sum\limits_{i=1}^N \ln(1 + x_i).
\end{equation}
Therefore, (\ref{eq_3-1-2}) can be rewritten as
$$\phi(\bm{x}) = g(b) = \lambda \cdot \ln(b).$$

\subsection{MUs' Payoff}\label{sec_3-2}
Every MU will carry a smartphone, and the MU's primary goal is to meet its own basic needs, which including making phone calls, sending messages, etc. Due to the development of microprocessors, nowdays, the smartphones are more and more powerful. After satisfying the MUs' basic needs, the smartphone still left some resources, which can be used to do some other activities, such as entertainment, noise monitoring, traffic monitoring, etc. That is, for MU $n$, the remaining energy of its smartphone can be used to entertain itself or participate in the MCS. Therefore, when deciding whether to participate in the MCS and distribute how many resources to the sensing task, each MU will consider both the resources demand of itself and rewards provided by the SP.

Let $\xi_n$ denote the MU $n$'s own resources demand. Due to the uncertainties of $n$'s behaviors\footnote{E.g. sometime $n$ may use the smartphone for long time entertainments, which leads $n$ to have few resources to participate in the crowdsensing.}, $\xi_n$ is defined as a random variable, also with a certain interval $[{\underline{\xi_n}}, \overline{\xi_n}]$. Where $\xi_n$ follows a probability distribution function $f_n(\xi_n)$ and a cumulative distribution function $F_n(\xi_n)$. Suppose that $n$ has $\tau_n$ units resources remaining, let $\delta_n$ denote the average revenue achieved from one unit of $n$'s own demand and $c_n$ denote the cost for one unit of $n$'s resource consumption. Then, MU $n$'s expected profit (from serving its own demand) is
\begin{eqnarray}
    R_n(\tau_n) \!\!\!\!\!&=&\!\!\!\!\! (\delta_n - c_n) \cdot E_{\xi_n}[min\{\xi_n, \tau_n\}] \label{eq_3-2-1} \\
                \!\!\!\!\!&=&\!\!\!\!\! (\delta_n - c_n) \cdot (\int\limits_{\underline{\xi_n}}^{\tau_n} {\xi_n f_n(\xi_n) d \xi_n} + \int\limits_{\tau_n}^{\overline{\xi_n}} {\tau_n f_n({\xi_n}) d \xi_n}). \nonumber
\end{eqnarray}

Now, if $MU$ $n$ admits $x_n$ units of resources to participate in the MCS for serving the SP, the resources left for serving $n$'s own demand are $\tau_n - x_n$, and a feasible $x_n$ must satisfy $x_n \le \tau_n$, obviously. For feasible $x_n$ and $p_n$, the MU $n$'s total profit including both the profit from serving its own demand and the profit from sensing for the SP is defined as
\begin{equation}\label{eq_3-2-2}
    R_n^{TOT}(x_n, p_n) = R_n(\tau_n - x_n) + p_n \cdot x_n - c_n \cdot x_n,
\end{equation}
where $p_n \cdot x_n - c_n \cdot x_n$ is the profit from serving the SP, including the sensing income (i.e. SP's payment $p_n \cdot x_n$) and the sensing cost $c_n \cdot x_n$.

Finally, the payoff of MU $n$ is the profit increment when providing sensing data for the SP, which is denoted by
\begin{eqnarray}
    U_n(x_n,p_n) \!\!\!\!\!&=&\!\!\!\!\! R_n^{TOT}(x_n,p_n) - R_n^{TOT}(0,0) \label{eq_3-2-3} \\
                 \!\!\!\!\!&=&\!\!\!\!\! R_n(\tau_n - x_n) - R_n(\tau_n) - c_n \cdot x_n + p_n \cdot x_n, \nonumber
\end{eqnarray}
where $R_n(\tau_n - x_n) - R_n(\tau_n) - c_n \cdot x_n$ is the MU $n$'s profit loss induced by sensing for the SP. Obviously, when the MU $n$ dose not participate in the MCS, its reservation payoff is $0$.

\subsection{Problem Formulation}\label{sec_3-3}
%We aim to design an incentive mechanism under which the SP will reward the MUs on the amounts of resources which are used for serving the SP. Therefore, the proposed mechanism can encourage MUs to participate in the crowdsensing and hence the sensing quality of the SP. In this paper, we formulate the incentive mechanism of crowdsensing as a Stackelberg game \cite{osborne1994course} based on non-cooperative game theory.
In order to encourage MUs to participate in the MCS and hence enhance the sensing quality of the SP, in this paper, the incentive mechanism of MCS is formulated as a Stackelberg game \cite{osborne1994course} based on non-cooperative game theory.
In the Stackelberg game, participants will be classified into two groups, namely leaders and followers, where the leaders have the privilege of moving first while the followers will move according to the leaders' actions.
Specifically, the MCS game studied in this paper is modeled as a single-leader with multi-followers Stackelberg game with two stage, where the SP acts as the leader and all the MUs act as the followers.
%Here, we formulate our work into a single-leader with multi-followers Stackelberg game, where the SP acts as the leaders, and all the MUs act as the followers. In the two-stage Stackelberg game,
Firstly, the SP (i.e. game leader) specifies the pricing strategy $\bm{p} = [p_1,p_2,\cdots,p_N]^T$. In the second stage, each MU acts as the game follower determine its sensing plan $x_n^*(p_n)$ to maximize its own payoff.

Given the definition of the Stackelberg game, the \textbf{SE} of the proposed incentive mechanism is defined as follows.
\begin{definition}
Let $\bm{p}^* = [p_1^*,p_2^*,\cdots,p_N^*]^T$ be the optimal solution to the SP and $x_n^*(p_n^*)$ be the optimal solution to the MU $n$. Then, $(\bm{p}^*,\bm{x}^*)$ is the \textbf{SE} for the proposed incentive mechanism if for any possible solution $(\bm{p},\bm{x})$, following conditions are satisfied:
\begin{eqnarray}
% \nonumber to remove numbering (before each equation)
  U_n(\bm{x}_{-n}^*,x_n^*,\bm{p}) &\ge& U_n(\bm{x}_{-n}^*,x_n,\bm{p}), \nonumber \\
  U(\bm{x}(\bm{p}^*),\bm{p}_{-n}^*,p_n^*) &\ge& U(\bm{x}(\bm{p}_{-n}^*,p_n),\bm{p}_{-n}^*,p_n), \nonumber
\end{eqnarray}
where $\bm{x}_{-n}^*$ and $\bm{p}_{-n}^*$ indicate the MUs' strategy profile and SP's strategy profile excluding $n$th MU, respectively.
\end{definition}

\section{Incentive Mechanism for Static MCS Game}\label{sec_4}
In this section, how to design the incentive mechanism for the static MCS game by solving the Stackelberg game defined in Section~\ref{sec_3-3} is demonstrated. In the static MCS game, the main challenges are (a) how to develop the resource allocation strategy for the MUs and (b) how to develop a pricing strategy for the SP. In the following, it is firstly proven that for any feasible $\bm{p}=[p_1,p_2,\cdots,p_N]^T$, each MU has a unique optimal resource allocation strategy in the second stage (Section~\ref{sec_4-1}). Afterwards, the pricing strategy in the first stage is presented and the existence and uniqueness of the \textbf{SE} for the proposed incentive mechanism is proven (Section~\ref{sec_4-2}). Finally, summary of the overall static MCS game is provided (Section~\ref{sec_4-3}).

\subsection{Optimal Resource Allocation Strategy at MU}\label{sec_4-1}
%First, we analyze MU $n$'s optimal resource allocation strategy $x_n^*$ under SP's any feasible pricing strategy $p_n$ in the second stage.
In this subsection, the MU $n$'s optimal resource allocation strategy $x_n^*$ under SP's any feasible pricing strategy $p_n$ in
the second stage is analyzed. The optimal resource allocation strategy $x_n^*$ for MU $n$ is based on the following optimization problem
\begin{eqnarray}
% \nonumber to remove numbering (before each equation)
  \max \!\!\! && \!\!\! U_n(x_n,p_n) \label{eq_4-1-1} \\
  s.t. \!\!\! && \!\!\! x_n \in [0,\tau_n]. \nonumber
\end{eqnarray}
\begin{lemma}\label{lemma_1}
Given any feasible $p_n$, MU $n$'s optimal resource allocation strategy $x_n^*$ satisfies
\begin{equation}\label{eq_4-1-2}
    x_n^*(p_n) = \left\{
\begin{array}{rcl}
0,       &      & if \;\; p_n \in [0,{\tilde p_n})\\
\tau_n - F_n^{(-1)}(\frac{\delta_n - p_n}{\delta_n - c_n}),     &      & if \;\; p_n \in [{\tilde p_n}, \delta_n]\\
\tau_n,     &      & if \;\; p_n \in (\delta_n, +\infty),
\end{array} \right.
\end{equation}
where $\tilde p_n = c_n + (\delta_n - c_n)[1-F_n(\tau_n)]$.
\end{lemma}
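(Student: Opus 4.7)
The plan is to treat the problem in (\ref{eq_4-1-1}) as a one-dimensional concave maximization of $U_n(x_n,p_n)$ over the compact interval $[0,\tau_n]$. Once concavity is established, the optimizer is pinned down by a first-order condition on the interior together with sign-of-derivative checks at the two endpoints. The three pieces of (\ref{eq_4-1-2}) should then correspond to exactly three regimes of $p_n$ determined by where the unconstrained FOC root lies relative to $[0,\tau_n]$.

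First, I would differentiate $R_n(\tau_n-x_n)$ from (\ref{eq_3-2-1}) by applying Leibniz's rule to both integrals in the definition of $R_n$; writing $y=\tau_n-x_n$, the $y f_n(y)$ endpoint contributions cancel and one is left with the clean identity $R_n'(y)=(\delta_n-c_n)\bigl(1-F_n(y)\bigr)$. Plugging into (\ref{eq_3-2-3}) yields
\[
\frac{\partial U_n}{\partial x_n}=p_n-c_n-(\delta_n-c_n)\bigl(1-F_n(\tau_n-x_n)\bigr),
\]
and differentiating a second time gives $\partial^2 U_n/\partial x_n^2=-(\delta_n-c_n)f_n(\tau_n-x_n)\le 0$, which confirms strict concavity of $U_n(\cdot,p_n)$ on $[0,\tau_n]$. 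Solving $\partial U_n/\partial x_n=0$ for $x_n$ produces the closed-form candidate $x_n^{\mathrm{FOC}}=\tau_n-F_n^{(-1)}\!\bigl(\tfrac{\delta_n-p_n}{\delta_n-c_n}\bigr)$.

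Next, I would perform a three-way case split on the location of $x_n^{\mathrm{FOC}}$ relative to $[0,\tau_n]$. Evaluating $\partial U_n/\partial x_n$ at $x_n=0$ and setting it to zero produces exactly the threshold $\tilde p_n=c_n+(\delta_n-c_n)[1-F_n(\tau_n)]$; for $p_n<\tilde p_n$ the derivative is negative at $0$, so concavity forces $x_n^\ast=0$. For $p_n\in[\tilde p_n,\delta_n]$ the argument $(\delta_n-p_n)/(\delta_n-c_n)$ lies in $[0,F_n(\tau_n)]\subseteq[0,1]$, so $F_n^{(-1)}$ is well-defined and the interior formula is feasible in $[0,\tau_n]$, giving the middle branch. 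For $p_n>\delta_n$ the argument becomes negative and no interior root exists; using $R_n'(y)\le \delta_n-c_n$ we find $\partial U_n/\partial x_n\ge p_n-\delta_n>0$ on all of $[0,\tau_n]$, so $U_n$ is strictly increasing and the maximum is attained at the upper boundary $x_n^\ast=\tau_n$.

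The main obstacle is the careful application of Leibniz's rule to $R_n$, since $\tau_n-x_n$ appears simultaneously as an upper integration limit of one integral and inside the integrand of the next; the proof must verify that the boundary contributions cancel exactly to yield $(\delta_n-c_n)(1-F_n(\tau_n-x_n))$. A secondary subtlety is lining up the threshold $\tilde p_n$ with the $x_n=0$ boundary so that the two branches of (\ref{eq_4-1-2}) agree continuously at $p_n=\tilde p_n$, and checking that at $p_n=\delta_n$ the interior formula and the upper-boundary value are jointly consistent with the fact that $R_n$ is linear on $[0,\underline{\xi_n}]$.
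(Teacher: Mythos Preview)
Your proposal is correct and follows essentially the same approach as the paper: establish strict concavity of $U_n(\cdot,p_n)$ via the second derivative, obtain the interior first-order condition $(\delta_n-c_n)(F_n(\tau_n-x_n)-1)+p_n-c_n=0$, and then perform the three-way case split on $p_n$ relative to $\tilde p_n$ and $\delta_n$. The only cosmetic difference is that the paper first simplifies $U_n$ via integration by parts before differentiating, whereas you differentiate $R_n$ directly with Leibniz's rule, but both routes yield the identical first-order condition and boundary analysis.
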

\begin{proof}
According to Eq.~(\ref{eq_3-2-1}) and (\ref{eq_3-2-3}), $U_n(\tau_n,x_n,p_n)$ (Hereafter, $U_n$ will be used instead for convenience) satisfies
\begin{eqnarray}
% \nonumber to remove numbering (before each equation)
  U_n \!\!\!\!\!&=& \!\!\!\!\! (\delta_n - c_n)(\int\limits_{\underline{\xi_n}}^{\tau_n - x_n} {\xi_n f_n(\xi_n) d \xi_n} + \nonumber\\
    \!\!\!\!\! && \!\!\!\!\!  \int\limits_{\tau_n - x_n}^{\overline{\xi_n}} {(\tau_n - x_n) f_n(\xi_n) d \xi_n}) - c_n x_n + p_n x_n + R_n(\tau_n), \nonumber
    %\!\!\!\! && \!\!\!\!  c_n \cdot x_n + p_n \cdot x_n - R(\tau_n),
\end{eqnarray}
where
\begin{eqnarray}
% \nonumber to remove numbering (before each equation)
  \!\!\!\!\! && \!\!\!\!\! \int\limits_{\underline{\xi_n}}^{\tau_n - x_n} {\xi_n f_n(\xi_n) d \xi_n} = \int\limits_{\underline{\xi_n}}^{\tau_n - x_n} {\xi_n d F_n(\xi_n)} \nonumber \\
  \!\!\!\!\! &=& \!\!\!\!\! (\tau_n - x_n)F_n(\tau_n - x_n) - \int\limits_{\underline{\xi_n}}^{\tau_n - x_n}{F_n(\xi_n) d \xi_n}, \nonumber
\end{eqnarray}
and
\begin{eqnarray}
% \nonumber to remove numbering (before each equation)
  \int\limits_{\tau_n - x_n}^{\overline{\xi_n}} {(\tau_n - x_n) f_n(\xi_n) d \xi_n} \!\!\!\!\! &=& \!\!\!\!\! (\tau_n - x_n)(1-F_n(\tau_n - x_n)). \nonumber
\end{eqnarray}
Therefore,
\begin{eqnarray}
% \nonumber to remove numbering (before each equation)
  U_n \!\!\!\!\! &=& \!\!\!\!\! (\delta_n - c_n)(\tau_n - x_n - \int\limits_{\underline{\xi_n}}^{\tau_n - x_n}{F_n(\xi_n) d\xi_n}) + \nonumber \\
  \!\!\!\!\! && \!\!\!\!\! (p_n - c_n) \cdot x_n - R_n(\tau_n). \nonumber
\end{eqnarray}
Then, the first- and second-order derivatives of $U_n$ with respect to $x_n$ can be derived as follows respectively
$$
\left\{ \begin{array}{l}
\frac{{\partial {U_n}}}{{\partial {x_n}}} = (\delta_n - c_n)(F_n(\tau_n - x_n) - 1) + p_n - c_n,\\
\frac{{{\partial ^2}{U_n}}}{{{\partial }{x_n}^2}} =  - (\delta_n - c_n) f_n(\tau_n - x_n).
\end{array} \right.
$$
Since $\delta_n \ge c_n$ and $f_n(\cdot) > 0$, it can be easily derived that the second-order derivative of $U_n$ to $x_n$ is negative. Also, the $x_n$ set of Problem~(\ref{eq_4-1-1}) is bounded and compact. Therefore, Problem~(\ref{eq_4-1-1}) is a strictly convex optimization problem. By setting the first-order of $U_n$ to $0$, the following equation is derived
\begin{equation}\label{eq_4-1-3}
    (\delta_n - c_n)(F_n(\tau_n - x_n) - 1) + p_n - c_n = 0.
\end{equation}
Let
$$
\tilde p_n = c_n + (\delta_n - c_n)(1 - F_n(\tau_n)).
$$
Due that $F_n(\cdot) \in [0,1]$, $\frac{{\partial {U_n}}}{{\partial {x_n}}} = 0$ is achievable only when $p_n \in [\tilde p_n, \delta_n]$. By solving~(\ref{eq_4-1-3}), we obtain
$$
x_n^*(p_n) = \tau_n - F_n^{(-1)}(\frac{\delta_n - p_n}{\delta_n - c_n}).
$$

Moreover, if $p_n < \tilde p_n$, then $\frac{{\partial {U_n}}}{{\partial {x_n}}}$ is less than $0$, thus the optimal allocation strategy for MU $n$ is $x_n^*=0$. And if $p_n > \delta_n$, $\frac{{\partial {U_n}}}{{\partial {x_n}}}$ is greater than $0$, thus the optimal allocation strategy for MU $n$ is $x_n^* = \tau_n$. This completes the proof.
\end{proof}

Note that when $p_n \in [\tilde p_n, \delta_n]$, the first-order derivative of $x_n^*$ with respect to $p_n$ is
$$
\frac{{\partial x_n^*}}{{\partial {p_n}}} = \frac{1}{{{f_n}(F_n^{( - 1)}(\frac{{{\delta _n} - {p_n}}}{{{\delta _n} - {c_n}}}))}}\frac{1}{{{\delta _n} - {c_n}}}.
$$
For any $p_n \in [\tilde p_n, \delta_n]$, it can be derived that $\frac{{\partial x_n^*}}{{\partial {p_n}}} > 0$. This indicates that the higher price offered by the SP is, the more resources allocated by MUs to the SP are. The second derivative of $x_n^*$ with respect to $p_n$ is
$$
\frac{{{\partial ^2}x_n^*}}{{\partial {p_n}^2}} = \frac{{{{f'}_n}(F_n^{( - 1)}(\frac{{{\delta _n} - {p_n}}}{{{\delta _n} - {c_n}}}))}}{{{{({f_n}(F_n^{( - 1)}(\frac{{{\delta _n} - {p_n}}}{{{\delta _n} - {c_n}}})))}^3}}}\frac{1}{{{{({\delta _n} - {c_n})}^3}}}.
$$
Hence, we can derive that if ${f'}_n(\cdot) \le 0$ (${f'}_n(\cdot) \ge 0$), the $x_n^*$ is a concave (convex) function of $p_n$. Notably, in this paper, it is assumed that ${f'}_n(\cdot) \le 0$\footnote{This is a very common hypothesis. There are lots of distributions satisfying that their probability distribution function are non-increasing, such as uniform distribution, exponential distribution, and Gaussian distribution, etc.}, thus $x_n^*$ is a concave function of $p_n$.
\subsection{Optimal Pricing Strategy at SP}\label{sec_4-2}
Now, the SP's optimal pricing strategy $\bm{p}^*$ in the first stage of the Stackelberg game is investigated. For any feasible $\bm{p}$, as has analyzed in Section~\ref{sec_4-1}, there is a unique $\bm{x}^*(\bm{p})$ indicates the MUs' best responses in the second stage of the Stackelberg game. The optimal pricing strategy at SP is determined by the following optimization problem
\begin{eqnarray}
% \nonumber to remove numbering (before each equation)
  \max \!\!\!\! && \!\!\!\! U(\bm{x}^*,\bm{p}) \label{eq_4-2-1} \\
  s.t. \!\!\!\! && \!\!\!\! p_n \ge 0, \forall n \in \mathcal{N}, \nonumber \\
  var. \!\!\!\! && \!\!\!\! x_n^* \in [0, \tau_n]. \nonumber
\end{eqnarray}
For convenience, in the following part, $U$ will be utilized to replace $U(\bm{x}^*,\bm{p})$.

Since $x_n^*$ is the function of $p_n$, according to Eq.~(\ref{eq_3-1-1}), the first-order derivative of $U(\bm{x}^*,\bm{p})$ with respect to $p_n$ is
\begin{equation}\label{eq_4-2-2}
    \frac{{\partial U}}{{\partial {p_n}}} = \frac{{\partial g(b)}}{{\partial b}}\frac{1}{{1 + x_n^*}}\frac{{\partial x_n^*}}{{\partial {p_n}}} - {p_n}\frac{{\partial x_n^*}}{{\partial {p_n}}} - x_n^*.
\end{equation}
According to Lemma~\ref{lemma_1}, if $p_n < \tilde p_n$ or $p_n > \delta_n$, then $\frac{\partial x_n^*}{\partial p_n} \equiv 0$. Thus, if $p_n < \tilde p_n$, $\frac{\partial U}{\partial p_n} = 0$. This indicates that any $p_n \in [0, \tilde p_n)$ is indifferent to the SP. If $p_n > \delta_n$, $\frac{\partial U}{\partial p_n} = -\tau_n$. This indicates that any price $p_n$ greater than $\delta_n$ is dominated by $\delta_n$. Therefore, the analysis of $p_n \in [0, \tilde p_n]$ and $p_n > \delta_n$ is meaningless. The following parts only focus on $p_n \in [\tilde p_n, \delta_n]$, $\forall n \in \mathcal{N}$. For convenience, $g'(b)$ and $g''(b)$ are utilized to denote $\frac{\partial g(b)}{\partial b}$ and $\frac{\partial^2 g(b)}{\partial b^2}$ respectively.

\begin{lemma}\label{lemma_2}
For the optimal pricing strategy profile $\bm{p}^* = [p_1^*,p_2^*,\cdots,p_N^*]^T$, $p_n^*$ must satisfy
\begin{equation}\label{eq_4-2-3}
    p_n^* \le \frac{g'(b)}{1+x_n^*}.
\end{equation}
and in addition, $p_n^*$ must not blow $\tilde p_n$, or it is indifferent to the SP.
\end{lemma}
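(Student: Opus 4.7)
The plan is to work directly from the first-order derivative in~(\ref{eq_4-2-2}) and exploit the sign information about $x_n^*$ and $\partial x_n^*/\partial p_n$ that Lemma~\ref{lemma_1} has already pinned down. The argument naturally splits along the three pricing regimes of that lemma.

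First I would dispose of the two inactive regimes, which simultaneously proves the ``not below $\tilde p_n$'' clause. On $[0,\tilde p_n)$, Lemma~\ref{lemma_1} gives $x_n^* \equiv 0$ and $\partial x_n^*/\partial p_n \equiv 0$, so every term in~(\ref{eq_4-2-2}) vanishes: $U$ is flat in $p_n$ and the SP's payoff is identical for any such price, which is exactly what ``indifferent to the SP'' means. On $(\delta_n,\infty)$, $x_n^* \equiv \tau_n$ and $\partial x_n^*/\partial p_n \equiv 0$ yield $\partial U/\partial p_n = -\tau_n < 0$, so $p_n = \delta_n$ strictly dominates any price in this range. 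Consequently $p_n^*$ can be restricted to $[\tilde p_n,\delta_n]$ with no loss, and any optimal $p_n^*$ falling below $\tilde p_n$ is so only in the degenerate, payoff-equivalent sense.

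For the remaining active regime $p_n \in [\tilde p_n,\delta_n]$, Lemma~\ref{lemma_1} guarantees $\partial x_n^*/\partial p_n > 0$ and $x_n^* \ge 0$. Setting the first-order condition $\partial U/\partial p_n = 0$ in~(\ref{eq_4-2-2}) and collecting terms produces
\[
\left(\frac{g'(b)}{1+x_n^*} - p_n^*\right)\frac{\partial x_n^*}{\partial p_n} \;=\; x_n^*.
\]
The right-hand side is non-negative and the multiplier on the left is strictly positive, so the bracketed factor must itself be non-negative; this is precisely the claimed inequality $p_n^* \le g'(b)/(1+x_n^*)$. Because $b$ depends on the whole vector $\bm{x}^*$ and hence on $\bm{p}^*$, the bound is an implicit characterization rather than a closed-form bound, which is the standard shape of a Stackelberg-leader first-order condition.

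The main obstacle I anticipate is not the rearrangement above but the sufficiency question: the FOC only yields a necessary condition, and to conclude that the critical price is truly a maximizer one should verify that $U$ is (quasi-)concave in $p_n$ on $[\tilde p_n,\delta_n]$. This would require differentiating~(\ref{eq_4-2-2}) once more and combining $g''(b) < 0$ with the concavity of $x_n^*$ in $p_n$ already noted from $f'_n(\cdot) \le 0$, then tracking the sign of the cross terms involving $g'(b)/(1+x_n^*)^2$ and $\partial^2 x_n^*/\partial p_n^2$. The bookkeeping is routine but somewhat tedious, and is the only real computation hiding behind the otherwise one-line derivation of the inequality itself.
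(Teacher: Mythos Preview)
Your argument is correct and essentially matches the paper's, with one framing difference worth noting. The paper does not set the first-order condition to zero; it argues by contradiction: assume $p_n^* > g'(b)/(1+x_n^*)$, substitute into~(\ref{eq_4-2-2}), observe that $\partial U/\partial p_n = \big(\frac{g'(b)}{1+x_n^*}-p_n^*\big)\frac{\partial x_n^*}{\partial p_n} - x_n^* < 0$, and conclude that a slightly lower price strictly improves $U$, contradicting optimality of $p_n^*$. This sidesteps the need to assume the optimum is an interior stationary point, which your ``set $\partial U/\partial p_n = 0$'' step implicitly requires. Your anticipated obstacle about concavity is not part of this lemma in the paper: Lemma~\ref{lemma_2} is stated and used purely as a necessary condition on $\bm{p}^*$, and the second-derivative bookkeeping you sketch is exactly what the paper carries out separately in Lemma~\ref{lemma_3} to establish strict concavity and uniqueness.
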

\begin{proof}
Assume that $p_n^* > \frac{g'(b)}{1+x_n^*}$, take it into Eq.~(\ref{eq_4-2-2}). Since $\frac{\partial x_n^*}{\partial p_n^*} > 0$, we can obtain that
$$\frac{{\partial U}}{{\partial {p_n^*}}} = (\frac{g'(b)}{1+x_n^*} - p_n^*) \frac{\partial x_n^*}{\partial p_n^*} - x_n^* < 0.$$
This implies that there exists a price $p_n < p_n^* - \varepsilon$ ($\varepsilon $ is a very small positive real number), which leads the SP to a higher payoff. This is in contradiction with that $p_n^*$ is the optimal pricing strategy. Therefore, the assumption does not hold. This completes the proof.
\end{proof}

Moreover, Lemma~\ref{lemma_2} can be interpreted from a physical perspective. $\frac{\partial \phi(\bm{x}^*)}{\partial x_n^*}$ is the SP's utility increasing due to the unit resource contribution of MU $n$, which is equal to $\frac{g'(b)}{1+x_n^*}$. While $p_n^*$ is the payoff which SP pays to the MU $n$ for its unit resource contribution. Obviously, SP's payoff increasing due to MU $n$'s contribution must greater than the payoff it pays to MU $n$, or it will not recruit MU $n$.
\begin{lemma}\label{lemma_3}
Optimization problem~(\ref{eq_4-2-1}) has an unique solution.
\end{lemma}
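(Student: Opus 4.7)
The plan is to prove Lemma~\ref{lemma_3} by changing variables from the price vector $\bm{p}$ to the resource vector $\bm{x}$ using the bijection established in Lemma~\ref{lemma_1}, and then showing that the reformulated objective is strictly concave on a compact convex domain. As a preliminary step I would restrict attention to the nontrivial region $p_n \in [\tilde p_n, \delta_n]$ for each $n$, which is legitimate by the discussion preceding Lemma~\ref{lemma_2}: prices below $\tilde p_n$ are indifferent to the SP and prices above $\delta_n$ are strictly dominated. On this region, Lemma~\ref{lemma_1} can be inverted to give
$$
p_n(x_n) = \delta_n - (\delta_n - c_n)\, F_n(\tau_n - x_n),
$$
which is a strictly increasing continuous bijection between the relevant price interval and a compact subinterval $I_n \subseteq [0,\tau_n]$. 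Problem~(\ref{eq_4-2-1}) is therefore equivalent to maximizing
$$
\tilde U(\bm{x}) = \phi(\bm{x}) - \sum_{n=1}^{N} p_n(x_n)\, x_n
$$
over the compact convex box $\prod_{n=1}^{N} I_n$.

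Next I would verify that $\tilde U$ is strictly concave in $\bm{x}$. For the utility term, $\phi(\bm{x}) = \lambda\ln\!\bigl(1+\sum_i \ln(1+x_i)\bigr)$ is the composition of a strictly concave and strictly increasing outer logarithm with the strictly concave inner function $1+\sum_i \ln(1+x_i)$, so $\phi$ is strictly concave. For the cost term it suffices to prove that every summand $x_n\mapsto p_n(x_n)\,x_n$ is strictly convex in $x_n$. Direct differentiation gives
$$
p_n'(x_n) = (\delta_n-c_n)\,f_n(\tau_n-x_n) > 0, \qquad p_n''(x_n) = -(\delta_n-c_n)\,{f'}_n(\tau_n-x_n) \ge 0,
$$
the latter being exactly the standing assumption ${f'}_n(\cdot)\le 0$ made after Lemma~\ref{lemma_1}. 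Hence
$$
\frac{d^2}{dx_n^2}\bigl[p_n(x_n)\,x_n\bigr] = p_n''(x_n)\,x_n + 2\,p_n'(x_n) > 0.
$$
Because the $N$ summands depend on disjoint coordinates, their sum is strictly convex in $\bm{x}$, and therefore $\tilde U$ is strictly concave.

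A continuous strictly concave function on a nonempty compact convex set attains its maximum at a unique point $\bm{x}^*$; the bijection $p_n \leftrightarrow x_n$ then delivers the unique optimal price vector $\bm{p}^*$, proving the lemma. The step I expect to be the main obstacle is strict concavity itself: attacking Problem~(\ref{eq_4-2-1}) directly in the original price variables is awkward because the term $-p_n\,x_n^*(p_n)$ has a Hessian of ambiguous sign (the contribution $-2\,\partial x_n^*/\partial p_n$ is negative while $-p_n\,\partial^2 x_n^*/\partial p_n^2$ is nonnegative, since $x_n^*$ is itself concave in $p_n$), and the outer $g(b)$ couples the $N$ price variables through a common $b$. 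The change of variables to $\bm{x}$ replaces the awkward composite with a cleanly separable strictly convex cost and decouples the argument, which is precisely what makes the strict-concavity proof go through.
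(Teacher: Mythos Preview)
Your argument is correct and takes a genuinely different route from the paper's. The paper works directly in the price variables: it computes the Hessian of $U$ with respect to $\bm{p}$, splits it as $H=H_1+H_2$ with $H_1$ diagonal and $H_2=g''(b)\,\bm{q}\bm{q}^T$ rank-one, and then shows negative (semi)definiteness of each piece. The delicate step there is the sign of the diagonal term $(\tfrac{g'(b)}{1+x_n^*}-p_n)\,\tfrac{\partial^2 x_n^*}{\partial p_n^2}$, which the paper handles by invoking Lemma~\ref{lemma_2}. Your change of variables to $\bm{x}$ eliminates precisely this difficulty: the cost becomes separable with $\frac{d^2}{dx_n^2}[p_n(x_n)x_n]=p_n''(x_n)x_n+2p_n'(x_n)>0$ immediately from $f_n>0$ and $f_n'\le 0$, and the utility $\phi$ is handled by a clean concave-composition argument rather than an explicit Hessian decomposition. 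The upshot is that your proof does not need Lemma~\ref{lemma_2} at all and avoids the cross-coupling through $b$ that makes the $\bm{p}$-Hessian awkward; the paper's approach, by contrast, stays in the original pricing formulation but must pay for it with the auxiliary Lemma~\ref{lemma_2} and a more intricate matrix argument. Both rely on the same standing hypothesis $f_n'\le 0$, so neither is more general, but yours is the more transparent of the two.
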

\begin{proof}
The Hassian matrix of $U$ is defined as $H$, which satisfies
$$
H = \left[ {\begin{array}{*{20}{c}}
{\frac{{{\partial ^2}U}}{{\partial {p_1}^2}}}&{\frac{{{\partial ^2}U}}{{\partial {p_1}\partial {p_2}}}}& \cdots &{\frac{{{\partial ^2}U}}{{\partial {p_1}\partial {p_N}}}}\\
{\frac{{{\partial ^2}U}}{{\partial {p_2}\partial {p_1}}}}&{\frac{{{\partial ^2}U}}{{\partial {p_2}^2}}}& \cdots &{\frac{{{\partial ^2}U}}{{\partial {p_2}\partial {p_N}}}}\\
 \vdots & \vdots & \ddots & \vdots \\
{\frac{{{\partial ^2}U}}{{\partial {p_N}\partial {p_1}}}}&{\frac{{{\partial ^2}U}}{{\partial {p_N}\partial {p_2}}}}& \cdots &{\frac{{{\partial ^2}U}}{{\partial {p_N}^2}}}
\end{array}} \right].
$$

According to Eq.~\ref{eq_3-1-1}, the second-order derivative of $U$ with respect to $p_n$ is
\begin{eqnarray}
    \frac{{{\partial ^2}U}}{{\partial {p_n}^2}} \!\!\!\! &=& \!\!\!\! \frac{{g''(b) - g'(b)}}{{{{(1 + x_n^*)}^2}}}{(\frac{{\partial x_n^*}}{{\partial {p_n}}})^2} - 2\frac{{\partial x_n^*}}{{\partial {p_n}}} + \label{eq_4-2-4} \\
    \!\!\!\! && \!\!\!\! (\frac{{g'(b)}}{{1 + x_n^*}} - {p_n})\frac{{{\partial ^2}x_n^*}}{{\partial {p_n}^2}}. \nonumber
\end{eqnarray}
Moreover, the second-order partial derivative of $U$ with respect to $p_i$ and $p_j$ is
\begin{equation}\label{eq_4-2-5}
    \frac{{\partial ^2}U}{\partial p_i \partial p_j} = \frac{{\partial ^2}U}{\partial p_j \partial p_i} = g''(b) \frac{1}{(1+x_i^*)(1+x_j^*)} \frac{\partial x_i^*}{\partial p_i} \frac{\partial x_j^*}{\partial p_j}.
\end{equation}
Set
\begin{equation}\label{eq_4-2-6}
    H_1 = \left[ {\begin{array}{*{20}{c}}
{{\lambda _1}}&0& \cdots &0\\
0&{{\lambda _2}}& \cdots &0\\
 \vdots & \vdots & \ddots & \vdots \\
0&0& \cdots &{{\lambda _N}}
\end{array}} \right],
\end{equation}
where $\lambda_n = (\frac{g'(b)}{1+x_n^*} - p_n)\frac{\partial^2 x_n^*}{\partial {p_n}^2} - 2 \frac{\partial x_n^*}{\partial p_n} - \frac{g'(b)}{(1+x_n^*)^2} (\frac{\partial x_n^*}{\partial p_n})^2$, $\forall n \in \mathcal{N}$. According to Lemma~\ref{lemma_2}, $\frac{g'(b)}{1+x_n^*} - p_n \ge 0$. Also, we have declared in Section~\ref{sec_4-1}, $\frac{\partial x_n^*}{\partial p_n} > 0$ and $\frac{\partial^2 x_n^*}{\partial {p_n}^2} \le 0$. Meanwhile, we can easily derive that $g'(b) > 0$. As a result,
$$\lambda_n \le 0.$$
Furthermore, set
$$
    H_2 = g''(b) \left[ {\begin{array}{*{20}{c}}
{{H_2}(1,1)}&{{H_2}(1,2)}& \cdots &{{H_2}(1,N)}\\
{{H_2}(2,1)}&{{H_2}(2,2)}& \cdots &{{H_2}(2,N)}\\
 \vdots & \vdots & \ddots & \vdots \\
{{H_2}(N,1)}&{{H_2}(N,2)}& \cdots &{{H_2}(N,N)}
\end{array}} \right],
$$
where $H_2(i,j) = H_2(j,i) = \frac{1}{(1+x_i^*)(1+x_j^*)}\frac{\partial x_i^*}{\partial p_i} \frac{\partial x_j^*}{\partial p_j}$, $\forall i,j \in \mathcal{N}$. Therefore, we can rewrite $H_2$ as
\begin{equation}\label{eq_4-2-7}
H_2 = g''(b)\bm{q}\bm{q}^T,
\end{equation}
where $\bm{q} = [q_1,q_2,\cdots,q_N]^T$, and $q_n = \frac{1}{1+x_n^*} \frac{\partial x_n^*}{\partial p_n}$. According to the definition of Hassian matrix, we can obtain that
$$H = H_1 + H_2.$$
Randomly select a vector $\bm{v} = [v_1,v_2,\cdots,v_N]^T$, where $v_i \in \mathbb{R}$ and the elements in $\bm{v}$ are not all $0$. Then we have that
$$
\bm{v}^T H \bm{v} = \bm{v}^T H_1 \bm{v} + \bm{v}^T H_2 \bm{v}.
$$
According to Eq.~(\ref{eq_4-2-6}), we can derive that
$$
\bm{v}^T H_1 \bm{v} = \sum\limits_{i=1}^N \lambda_k v_k^2 \le 0.
$$
Based on Eq.~(\ref{eq_4-2-7}), we have that
$$\bm{v}^T H_2 \bm{v} = g''(b)\bm{v}^T \bm{q} \bm{q}^T \bm{v} = g''(b)(\sum\limits_{i=k}^N{\frac{v_k}{1+x_k^*} \frac{\partial x_k^*}{\partial p_k}})^2.$$
Since $g''(b) = -\frac{1}{(1+b)^2} < 0$ and $\frac{\partial x_k^*}{\partial p_k} > 0$, we can derive that $\bm{v}^T H_2 \bm{v} < 0$. Therefore, we have that
$$\bm{v}^T H \bm{v} < 0.$$
This indicates that $U$ is a strictly concave function. Furthermore, the constraint set of Problem~(\ref{eq_4-2-1}) is nonempty, compact, and convex. Thus, Problem~(\ref{eq_4-2-1}) has a unique solution \cite{boyd2004convex}. This completes the proof.
\end{proof}

Through Lemma~\ref{lemma_1} and Lemma~\ref{lemma_3}, we can obtain that for the static MCS game in the first stage, SP has a unique pricing strategy profile $\bm{p}^*$, which satisfies
$$I: \;\;\;\; p_n^* = \mathop {\arg \max }\limits_{{p_n}} U(\bm{x}^*,\bm{p}).$$
In the second stage, each MU has a unique resource allocation strategy $x_n^*$, which satisfies
$$II: \;\;\;\; x_n^* = \mathop {\arg \max }\limits_{{x_n}} U_n(\tau_n, x_n, p_n^*).$$
\begin{theorem}\label{theorem}
There exists a unique \textbf{SE} in the static MCS game.
\end{theorem}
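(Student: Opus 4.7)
The plan is to prove Theorem~\ref{theorem} by a standard backward induction argument on the two-stage Stackelberg game, combining the results already established in Lemma~\ref{lemma_1} and Lemma~\ref{lemma_3}. The conceptual point is that a Stackelberg equilibrium is, by definition, a pair $(\bm{p}^*,\bm{x}^*)$ where (i) $\bm{x}^*$ is the followers' best response to $\bm{p}^*$, and (ii) $\bm{p}^*$ maximizes the leader's payoff given that the followers respond optimally. So to show existence and uniqueness of the SE it suffices to show uniqueness at each stage.

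First I would solve Stage~II. For any feasible price vector $\bm{p}$, the MUs' decisions decouple across $n$ because MU $n$'s payoff $U_n(x_n,p_n)$ depends only on its own $x_n$ and $p_n$ (no coupling through other MUs' actions). Thus the followers' joint best response reduces to $N$ independent single-variable problems, and Lemma~\ref{lemma_1} gives a unique maximizer $x_n^*(p_n)$ in closed form for every admissible $p_n$. This yields a well-defined best-response mapping $\bm{x}^*(\bm{p})$.

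Next I would plug this best-response mapping into the SP's payoff and treat Stage~I as the single optimization problem~(\ref{eq_4-2-1}). Lemma~\ref{lemma_3} shows that $U(\bm{x}^*(\bm{p}),\bm{p})$ is strictly concave in $\bm{p}$ over the relevant domain (the Hessian $H=H_1+H_2$ was shown to be negative definite), and the feasible set is nonempty, compact, and convex, so a unique optimizer $\bm{p}^*$ exists. Pairing $\bm{p}^*$ with the uniquely determined follower response $\bm{x}^*(\bm{p}^*)$ from Lemma~\ref{lemma_1} produces the unique SE.

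The only subtlety I expect is a domain-restriction remark: as noted after Lemma~\ref{lemma_2}, prices with $p_n<\tilde p_n$ are indifferent to the SP (the payoff is flat) and prices with $p_n>\delta_n$ are strictly dominated by $p_n=\delta_n$. So strict concavity only holds after restricting to $p_n\in[\tilde p_n,\delta_n]$ for each $n$, and I would explicitly say that the SE is defined up to this tie-breaking on the irrelevant flat region, or equivalently that we restrict attention to the interesting interval where Lemma~\ref{lemma_3} applies. Since Lemmas~\ref{lemma_1}--\ref{lemma_3} do all the heavy lifting, no real obstacle remains; the theorem is essentially a corollary obtained by assembling the backward-induction pieces.
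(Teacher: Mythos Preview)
Your proposal is correct and mirrors the paper's approach exactly: the paper presents Theorem~\ref{theorem} as an immediate consequence of Lemma~\ref{lemma_1} (unique follower best response in Stage~II) and Lemma~\ref{lemma_3} (unique leader optimum in Stage~I), assembled via backward induction. Your added remark on restricting to $p_n\in[\tilde p_n,\delta_n]$ is consistent with the discussion following Lemma~\ref{lemma_2} in the paper.
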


\begin{figure}[!ht]
  % Requires \usepackage{graphicx}
  \centering
  \includegraphics[width=9cm]{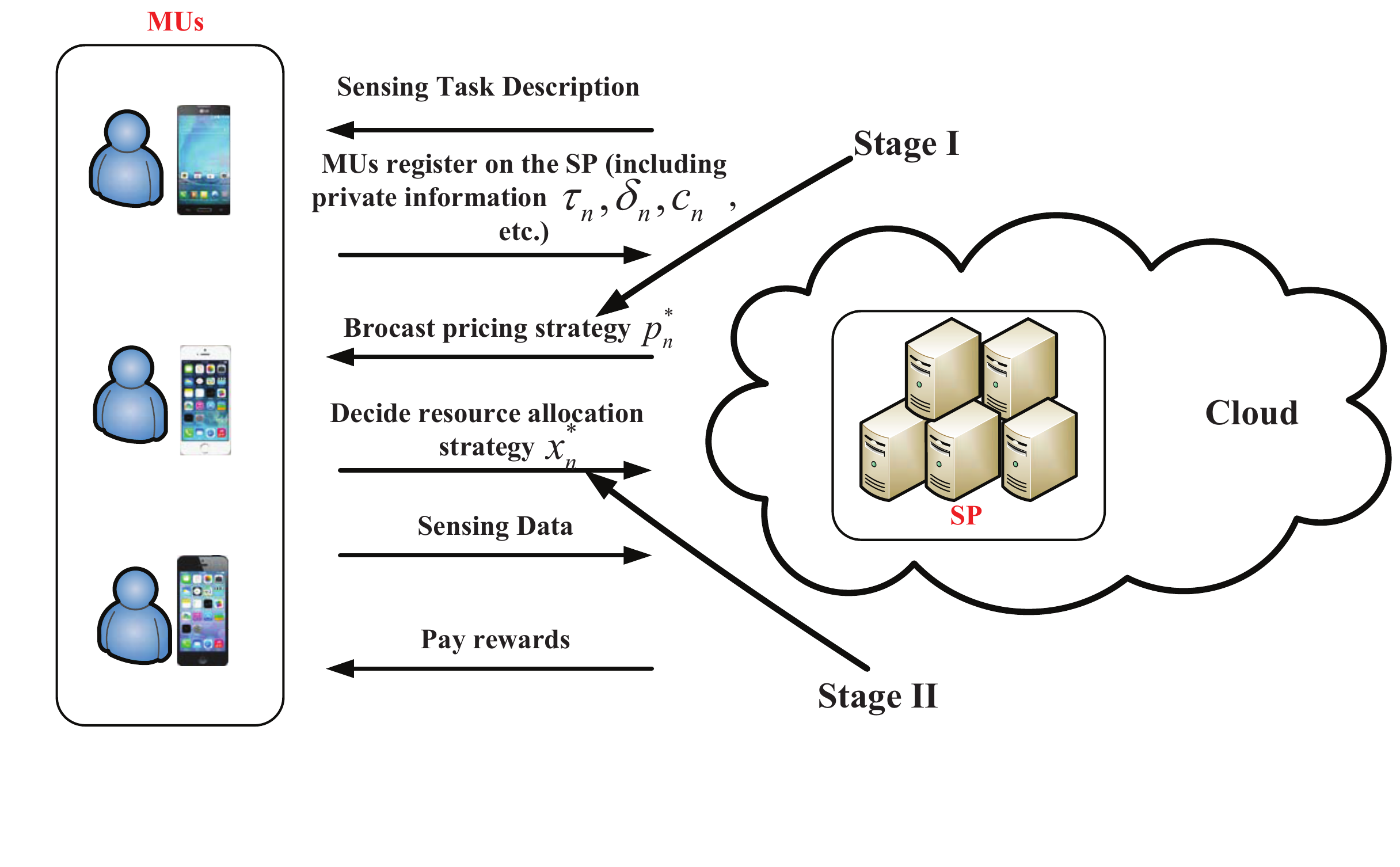}\\
  \caption{Interactions between MUs and SP in static MCS.}\label{fig_4-3}
\end{figure}

\subsection{Overall Static Crowdsensing Game}\label{sec_4-3}
In this subsection, the overall incentive mechanism based on static MCS game is presented. In the MCS system, a centralized SP where the task initiators could reside will lead the price decision and task allocation procedure. Fig.~\ref{fig_4-3} shows the detailed interactions between MUs and SP. First, the SP issues the sensing task, when the MUs are considering to joint the MCS, they need to register on the SP. Then the SP decides the pricing strategy according to the optimization problem defined by~(\ref{eq_4-2-1}). Optimization problem~(\ref{eq_4-2-1}) can be solved by Karush-Kuhn-Tucker (KKT) method \cite{boyd2004convex}. The MUs will be informed of the detailed pricing strategy, and then they could determine the optimal resource allocation strategy based on Lemma~\ref{lemma_1}. Finally, after the MUs completing the sensing task and uploading the sensing data, the SP will pay them the corresponding rewards.
%The detailed Algorithm of static crowdsensing game is shown in Algorithm~\ref{algorithm-1}.
%
%\begin{algorithm}[htb]
%\caption{The detailed algorithm of static crowdsensing game.}
%\label{algorithm-1}
%\textbf{Input:}
%the collector $s$, the requestor $d$, $\mathcal{R}=\{R_1,R_2,\cdots,R_k\}$, $\lambda_{ij}$ for any $i$ and $j$, the collector's bargaining power with respect to each relay candidate $\mathcal{B}_1=\{\alpha_{sR_1}^m, \alpha_{sR_2}^m, \cdots, \alpha_{sR_k}^m\}$, and the bargaining power of each relay candidate with respect to the collector $\mathcal{B}_2=\{\alpha_{R_1s}^m, \alpha_{R_2s}^m, \cdots, \alpha_{R_ks}^m\}$.\\
%\textbf{Output:} the selected relay $OR$ and its expected online rewards for $s$.
%\begin{algorithmic}[1]
%%Ensemble of classifiers on the current batch, $E_n$;
%\State \textbf{when the $s$ meets relay candidate $R_i$ do}
%\State $(OR,EFR_{max}^m)=NBF(s,d,\mathcal{R}\setminus\{R_i\},\lambda,\mathcal{B}_1,\mathcal{B}_2)$
%\State $A_{R_id}^m =  \int\limits_0^{T '_m}  {({{\eta '}_m} - {\beta _m}t) \cdot {f_{{X_{{R_i}d}}}}(t)dt}$
%\State $\pi_m = A_{R_id}^m$
%\State $ENR_{sR_i}^m = \frac{\alpha_{sR_i}^m \pi^m - \alpha_{sR_i}^m r_{R_i}^m + \alpha_{R_is}^m t_s^m }{\alpha_{sR_i}^m + \alpha_{R_is}^m} - t_s^m$
%\If {$ENR_{sR_i}^m \ge EFR_{max}^m$} \\
%$\quad$\Return $R_i$ and $ENR_{sR_i}^m$;
%\Else \\
%$\quad$\Return $OR$ and $EFR_{max}^m$;
%\EndIf
%\end{algorithmic}
%\end{algorithm}

\section{Dynamic Incentive Mechanism (DIM) Design for MCS}\label{sec_5}
In this section, a dynamic incentive mechanism (DIM) based on deep reinforcement learning (DRL) approach is designed for MCS. Since solving optimization problem $\rm{OP}_{\rm{TI}}$ in Eq.~(\ref{eq_4-2-1}) directly requires MUs' private information such as $\tau_n,\delta_n,c_n$, having them can be impractical and unsafe for MUs in reality. Hence, a DRL approach is designed to learn the optimal strategy directly from game history, during which no prior knowledge about MUs is required. In the following, we firstly establish the dynamic MCS game as a Markov Decision Process (MDP) for dynamic MCS game. Then, we present the DRL algorithm designed for SP to find the optimal pricing strategy in Section~\ref{sec_4-2}. Finally, we demonstrate the overall dynamic MCS game based on DRL.

\subsection{\emph{MDP} for Dynamic MCS Game}\label{sec_5-1}
The MDP ($\mathcal{M}$) for dynamic MCS game is composed of state space ($\mathcal{S}$), action space ($\mathcal{A}$), state transition probability function ($\mathcal{P}$), and reward function ($\mathcal{R}$), namely $\mathcal{M}=<\mathcal{S},\mathcal{A},\mathcal{P},\mathcal{R},\gamma>$ shown in Fig.~\ref{fig_5-1}.
\begin{figure}[!t]
\centering
\includegraphics[width=0.8\linewidth]{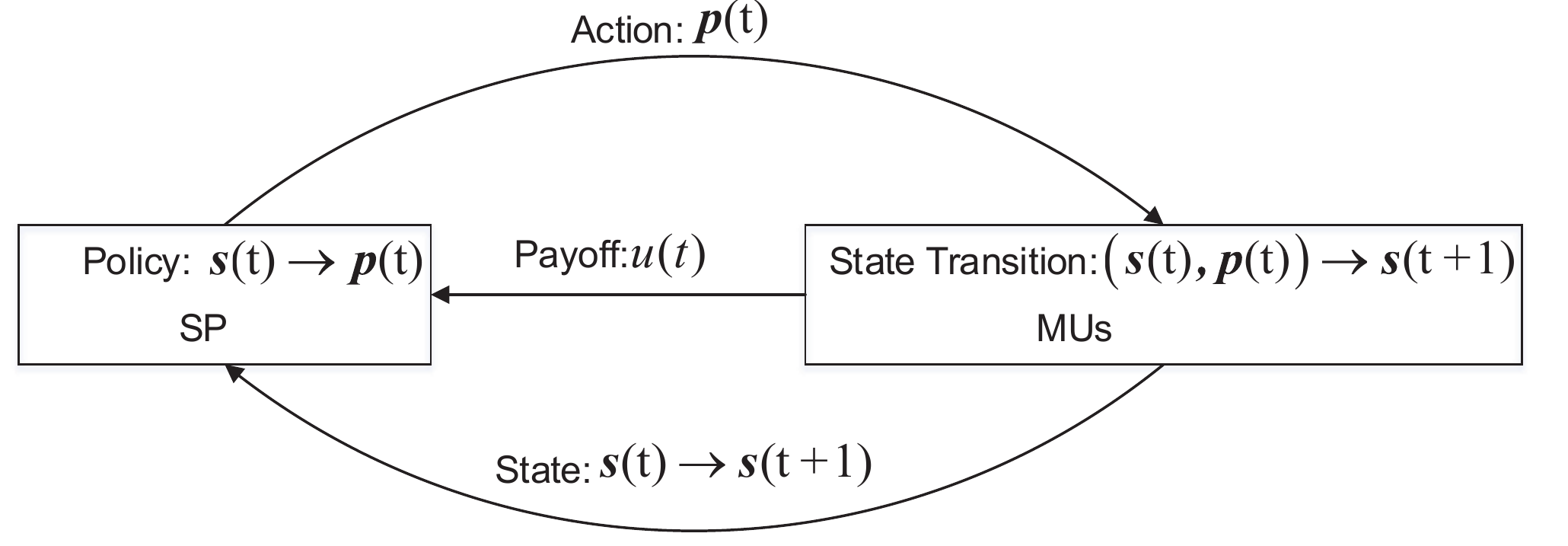}
\caption{Markov decision process.}\label{fig_5-1}
\end{figure}
\subsubsection{State space}
We define the state space of $\mathcal{M}$ as $\mathcal{S} =\{\bm{s}(t)|\forall~t \in \mathbb{N}\}$, where $\bm{s}(t)=[\bm{p}(t-L),\bm{x}(t-L),\cdots,\bm{p}(t-1),\bm{x}(t-1)] \in \mathbb{R}^{N \times 2L}$ denotes the past $L$ times game history between SP and MUs. More precisely, $\bm{p}(t)$ is SP's pricing strategy profile at step $t$ and $\bm{x}(t)$ is MUs' resource allocation strategy at step $t$. When $t \leq L$, $\bm{p}(t-L)$ and $\bm{x}(t-L)$ can be initialized randomly.
\subsubsection{Action space}
The action space of $\mathcal{M}$ is denoted as $\mathcal{A} =\{\bm{p}(t)|\forall~t \in \mathbb{N}\}$, which consists of SP's pricing strategy profiles.
\subsubsection{State transition probability function}
The state transition probability function is defined as  $\mathcal{P}: \mathcal{S}\times\mathcal{A}\times\mathcal{S}\rightarrow [0,1]$, meaning that the state $\bm{s}(t)$ will transit into $\bm{s}(t+1)$ satisfying $\bm{s}(t+1)\sim P(\bm{s}|\bm{s}(t),\bm{p}(t))$ after taking action $\bm{p}(t)$.
\subsubsection{Reward function}
The reward function $\mathcal{S}\times\mathcal{A}\rightarrow\mathbb{R}$ is proportional to the SP's payoff as follows
\begin{equation}
\label{eq.4-0}
r(t) = \xi U(\bm{x}(t),\bm{p}(t)),
\end{equation}
where $\xi$ is the scaling factor.
\subsubsection{Problem formulation}
Define SP's policy as $\pi:\mathcal{S}\times\mathcal{A}\rightarrow [0,1]$.
%At step $t$, the SP will choose it pricing strategy $\bm{p}(t)=\bm{a}(t)$ satisfying $\bm{a}(t)\sim\pi(\bm{a}|\bm{s}(t))$ and broadcast it to MUs. After that, MUs will determine their optimal resource allocation strategy $\bm{x}_{t}$ based on Lemma~\ref{lemma_1} and the SP's state $\bm{s}(t)$ will be updated into $\bm{s}(t+1)$. Finally, the SP will obtain its payoff $r(t)$.
Then, the goal of DRL-based DIM design is to find the optimal policy for SP satisfying
\begin{equation}
\begin{aligned}
\label{eq.4-1}
\bm{\theta}^{*} &= \arg\max_{\bm{\theta}}~L_1(\pi_{\bm{\theta}})
= \arg\max_{\bm{\theta}}~\int_{\mathcal{S}}\rho(\bm{s})V\big(\bm{s}\big)d\bm{s}\\
&= \arg\max_{\bm{\theta}}~\int_{\mathcal{S}}\rho(\bm{s})\int_{\mathcal{A}}\pi_{\bm{\theta}}(\bm{p}|\bm{s})Q\big(\bm{s},\bm{p}\big)d\bm{p}d\bm{s}
\end{aligned}
\end{equation}
%\mathbb{E}_{\bm{s}_{m}\sim\rho_{m}(\bm{s})}\Big[V_m\big(\bm{s}_{m}\big)\Big]\bm{p}_m
where
$V\big(\bm{s}\big)=\mathbb{E}\Big[\sum_{l=1}^{\infty}\gamma^{l-1}r(l)|\bm{s}(1)=\bm{s}\Big]$ is state value function, $Q(\bm{s},\bm{p})=\mathbb{E}\big[\sum_{l=1}^{\infty}\gamma^{l-1}r(l)|\bm{s}(1)=\bm{s},\bm{p}(1)=\bm{p}\big]$
is the action value function, $\rho(\bm{s})=\int_{\mathcal{S}}\sum_{l=1}^{\infty}\gamma^{l-1}P\big(\bm{s}(l)=\bm{s}|\bm{s}(1), \pi_{\bm{\theta}}\big)d\bm{s}(1)$ is the state probability distribution, and $\gamma \in [0,1]$ is a discount factor.

\subsection{Policy Optimization for MDP}
We adopt policy gradient method based on the proven actor-critic framework \cite{li2017deep} to deal with policy optimization problem described in Eq.~(\ref{eq.4-1}). Specifically, we employ an actor network $\pi_{\bm{\theta}}$ parameterized by $\bm{\theta}$ to generate stochastic action $\bm{a}(t) \sim \pi_{\bm{\theta}}(\cdot|\bm{s}(t))$ and a critic network $V_{\bm{\omega}}$ parameterized by $\bm{\omega}$ to approximate the state value function $V \big(\bm{s}(t)\big)$.

%[DRL2017] Li, Y.: `Deep reinforcement learning: An overview', arXiv preprint arXiv:1701.07274, 2017.
%[sutton2000policy] R. S. Sutton, D. A. McAllester, S. P. Singh, and Y. Mansour, ¡°Policy gradient methods for reinforcement learning with function approximation,¡± in NIPS'00, 2000, pp.1057¨C1063.
Referring to the stochastic policy gradient theorems in \cite{sutton2000policy} and \cite{schulman2015trust}, the policy gradient can be calculated as

\begin{equation}
\begin{aligned}
\nabla_{\bm{\theta}}L_1(\bm{\theta})&=\mathbb{E}_{\bm{s}\sim\rho,\bm{p}\sim\pi_{\bm{\theta}}}\Big[\nabla_{\bm{\theta}}\log\pi_{\bm{\theta}}(\bm{p}|\bm{s})Q\big(\bm{s},\bm{p}\big)\Big]\\
&=\mathbb{E}_{\bm{s}\sim\rho,\bm{p}\sim\pi_{\bm{\hat\theta}}}\Big[\nabla_{\bm{\theta}}\log\pi_{\bm{\theta}}(\bm{p}|\bm{s})f\big(\bm{s},\bm{p}\big)A\big(\bm{s},\bm{p})\big)\Big]\\
\end{aligned}
\end{equation}
%[nips2000] Richard S. Sutton, David McAllester, Satinder Singh, et al.Policy Gradient Methods for Reinforcement Learning with Function Approximation. Advances in Neural Information Processing Systems 12, pp. 1057{1063, MIT Press, 2000
where $f(\bm{s},\bm{p})=\frac{\pi_{\bm{\theta}}(\bm{p}|\bm{s})}{\pi_{\bm{\hat\theta}}(\bm{p}|\bm{s})}$, $A\big(\bm{s},\bm{p}\big)=Q\big(\bm{s},\bm{p}\big)-V\big(\bm{s}\big)$ is the advantage function, and the parameter of policy for sampling $\bm{p}$ is $\bm{\hat\theta}$.

Furthermore, in order to increase stability of training process based on policy gradient, \cite{schulman2017proximal} proposed proximal policy optimization (PPO) method, which clips the policy gradient as
\begin{equation}
\begin{aligned}
\label{eq.4-9}
\nabla_{\bm{\theta}}L_1^{'}(\bm{\theta})&=\nabla_{\bm{\theta}}\mathbb{E}_{\bm{s}\sim\rho,\bm{p}\sim\pi_{\bm{\hat\theta}}}\Big[\min\big(f(\cdot)A(\cdot), \eta(f(\cdot))A(\cdot)\big)\Big]\\
&\approx\sum_{k=1}^{D}\nabla_{\bm{\theta}}\log\pi_{\bm{\theta}}(k)\min\big[f(k)\hat A(k), \eta(f(k))\hat A(k)\big],
\end{aligned}
\end{equation}
where $\nabla_{\bm{\theta}}\log\pi_{\bm{\theta}}(k)=\nabla_{\bm{\theta}}\log\pi_{\bm{\theta}}(\bm{p}(k)|\bm{s}(k))$, $f(k) = \frac{\pi_{\bm{\theta}}(\bm{p}(k)|\bm{s}(k))}{\pi_{\bm{\hat\theta}}(\bm{p}(k)|\bm{s}(k))}$, $\hat A(k) = \sum_{l=k}^{D}r(l)+V_{\bm{\omega}}(\bm{s}(D+1))-V_{\bm{\omega}}(\bm{s}(k))$, $D$ is number of samples for policy gradient estimation at each training step, and $\eta(x)$ is the piecewise function with intervals $[x<1-\varepsilon, 1-\varepsilon\leq x \leq 1+\varepsilon, x>1+\varepsilon]$, $\varepsilon$ is an adjustable parameter.

Finally, the loss function for optimizing the critic network $V_{\bm{\omega}}$ is defined as
\begin{equation}
\begin{aligned}
\label{eq.4-10}
L_2(\bm{\omega}) &= \mathbb{E}_{\bm{s}\sim\rho(\bm{s})}\Big[-V_{\bm{\omega}}(\bm{s}) +\mathbb{E}_{\bm{s}'\sim P, \bm{p}\sim \pi_{\bm{\hat\theta}}}\big[r+V_{\bm{\omega}}(\bm{s}')\big]\Big]^2\\
&\approx\sum_{k=1}^{{D}}\Big[-V_{\bm{\omega}}\big(\bm{s}(k)\big)+\sum_{l=k}^{D}r(l)+V_{\bm{\omega}}\big(\bm{s}({D+1})\big)\Big]^2.
\end{aligned}
\end{equation}

\subsection{Proposed DRL-based DIM for SP}\label{sec_5-2}
\subsubsection{Procedure of dynamic game}
As illustrated in Fig.~\ref{fig_4-4}, the SP issues the sensing task firstly. Then, MUs register on the SP if they determine to join the MCS. At game step $t$, the SP will decide the pricing strategy $\bm{p}(t)$ according to its game memory matrix $\bm{s}(t)$. After that, MUs will obtain the detailed pricing strategy $\bm{p}(t)$ and then determine the optimal resource allocation strategy $\bm{x}(t)$ based on Lemma~\ref{lemma_1}. After MUs completing the sensing task and uploading the sensing data, the SP will pay them the corresponding rewards and attain its own payoff $r(t)$. Finally, the SP will update its negotation history into $\bm{s}(t+1)$ and start the new game.
\begin{figure*}[!htb]
  % Requires \usepackage{graphicx}
  \centering
  \includegraphics[width=14cm]{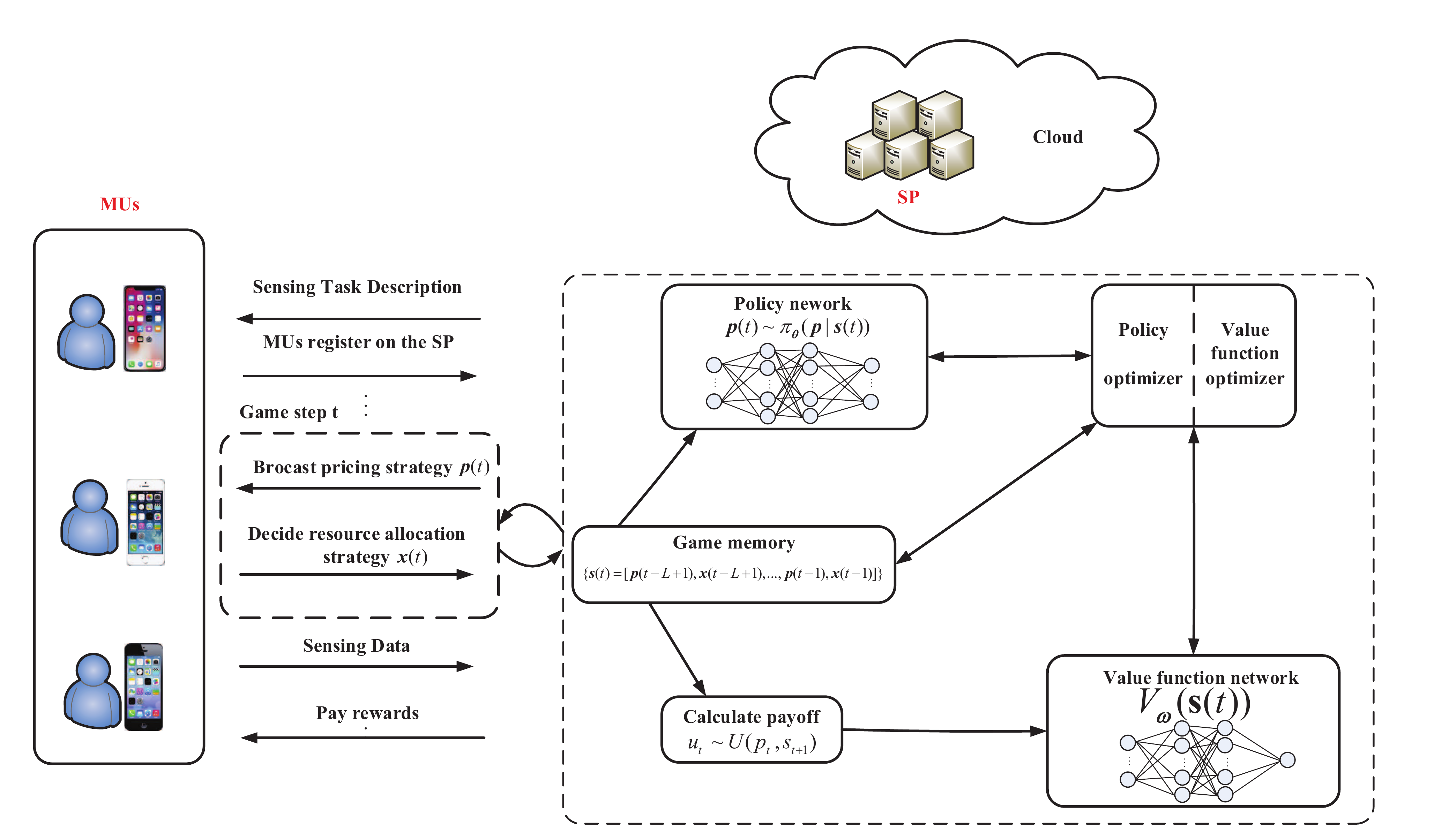}\\
  \caption{Dynamic mobile crowdsensing game.}\label{fig_4-4}
\end{figure*}
\subsubsection{Update actor and critic networks}
Each time after $D$ times dynamic mobile crowdsensing game, the actor and critic networks will be updated. More specifically, the SP will firstly calculate $V_{\bm{\omega}}\big(\bm{s}(k)\big)~(k=1, \cdots, D)$ by critic network. Afterwards, it will count $\sum_{l=k}^{D}r(l)$, $f(k)$, and $\hat A(k)~(k=1, \cdots, D)$. Then, the actor network $\pi_{\bm{\theta}}$ can be updated through gradient ascend method as
\begin{equation}
\label{eq4.11}
\bm{\theta} \leftarrow \bm{\theta} - l_1\nabla_{\bm{\theta}}L_1^{'}(\bm{\theta}),
\end{equation}
where $l_1$ is the learning rate for actor network updation. The critic network $V_{\bm{\omega}}$ can be updated through gradient descend method as
\begin{equation}
\label{eq4.12}
\bm{\omega} \leftarrow \bm{\omega} + l_2\nabla_{\bm{\omega}}L_2(\bm{\omega}),
\end{equation}
where $l_2$ is the learning rate for critic network updation.

\subsubsection{Detailed explanation of DRL-based DIM}
Algorithm \ref{algor-3-2} illustrates the pseudocode for the proposed DRL-based DIM. When a game begins, the SP initializes its state (Line 1). At each step $t$ in dynamic mobile crowdsensing game, by taking its state as the input of its policy network $\pi_{\bm{\theta}}$, the SP's pricing strategy profile $\bm{p}(t)$ can be determined and sent to MUs (Line 8). After obtaining MUs' resouce allocation (Line 9), the SP can calculate its payoffs and reward (Line 10). Then, the SP will update its state and record its game information (Line 11-12). Parameters of both actor network and critic network are optimized every $D$ step by utilizing the past $D$ game record (Line 10-11). After updating these two neural networks based on gradient ascent method and gradient descent method respectively for $M$ times (Line 15-16), a new episode of dynamic game will start (Line 2) and the SP will clear its replay buffer (Line 3).
\begin{algorithm}[!t]
\renewcommand{\algorithmicrequire}{\textbf{Input:}}
\renewcommand\algorithmicensure {\textbf{Output:}}
\caption{DRL-based DIM for SP}
\label{algor-3-2}
\begin{algorithmic}[1]
\REQUIRE Game history of the SP $\bm{s}(t)$.
\ENSURE Pricing strategy profile of the SP $\bm{p}(t)$.
\STATE Initialize $\bm{s}(t)$, $\pi_{\bm{\theta}}$ and $V_{\bm{\omega}}$.
\FOR {Episode in $1,2,\cdots$}
\STATE Clear the replay buffer $\mathcal{D}$.
\FOR {Step $t$ in $1,2,\cdots,D$}
\IF {Episode $\geq$ 2}
\STATE $\bm{s}(1) = \bm{s}(D+1)$
\ENDIF
\STATE Input $\bm{s}(t)$ into the policy network $\pi_{\bm{\theta}}$ and derive the pricing strategy $\bm{p}(t)$.
\STATE Receive the MUs' resource allocation strategy $\bm{x}(t)$.
\STATE Calculate the SP's payoff $U(\bm{x}(t),\bm{p}(t))$ by (\ref{eq_3-1-1}) and its reward $r_t$ by (\ref{eq.4-0}).
\STATE Update state $\bm{s}(t)$ into state $\bm{s}(t+1)$.
\STATE Store  $\{\bm{s}(t),\bm{p}(t),\bm{s}(t+1),r(t)\}$ into $\mathcal{D}$.
\ENDFOR
\FOR {i in $1,2,\cdots, M$}
\STATE Calculate $\nabla_{\bm{\theta}}L_1^{'}(\bm{\theta})$ and $\nabla_{\bm{\omega}}L_2(\bm{\omega})$ via (\ref{eq.4-9}) and (\ref{eq.4-10}).
\STATE Update $\bm{\theta}$ and $\bm{\omega}$ through (\ref{eq4.11}) and (\ref{eq4.12}).
\ENDFOR
\ENDFOR
\end{algorithmic}
\end{algorithm}
%\textbf{Yu: what is $epoch$ and  $maximal\_epoch$ in Algorithm 1?}

\section{Numerical Results}\label{sec_6}
In this section, numerical simulations are conducted. Specifically, $5$ MUs are randomly generated. For each MU, $c_n$ and $\delta_n$ are randomly from $[0,1]$ while guaranteeing $\delta_n > c_n$. We set the total available resources $\tau_n$ of each MU to $20$ units, and randomly drawn the own resources demand $\xi_n$ from a uniform distribution in $[0,25]$.

\subsection{\textbf{SE} under Dynamic Crowdsensing Game}\label{sec_6-1}
In this part, simulations have been performed to evaluate the system performance of the dynamic crowdsensing game, with $\lambda = 50$, $c_n$, $\delta_n$ randomly select from $[0,1]$ and $\delta_n > c_n$. Fig.~\ref{fig_6-2-1-1} and~\ref{fig_6-2-1-2} show the DRL based pricing strategy and resource allocation strategy converge to the optimal policy quickly in the dynamic crowdsensing game, which matches the theoretical results of the \textbf{SE} given in the static crowdsensing game in Section~\ref{sec_4}. This indicates that the incentive mechanism based on our designed DRL method will efficiently work. Meanwhile, as shown in Fig.~\ref{fig_6-2-2-1} and~\ref{fig_6-2-2-2}, the SP's payoff and MUs' payoff will also converge to the stable state quickly. More specifically, we can see that DIM outperforms the greedy and random pricing strategies. This is because in greedy and random policies, the SP always issues the higher prices to the MUs which leads the MUs obtain higher payoffs as shown in Fig.~\ref{fig_6-2-2-2} and SP obtains lower payoff.
\begin{figure*}[!ht]
\centering
\subfloat[Optimal pricing strategy.]{\includegraphics[width = 0.25\linewidth]{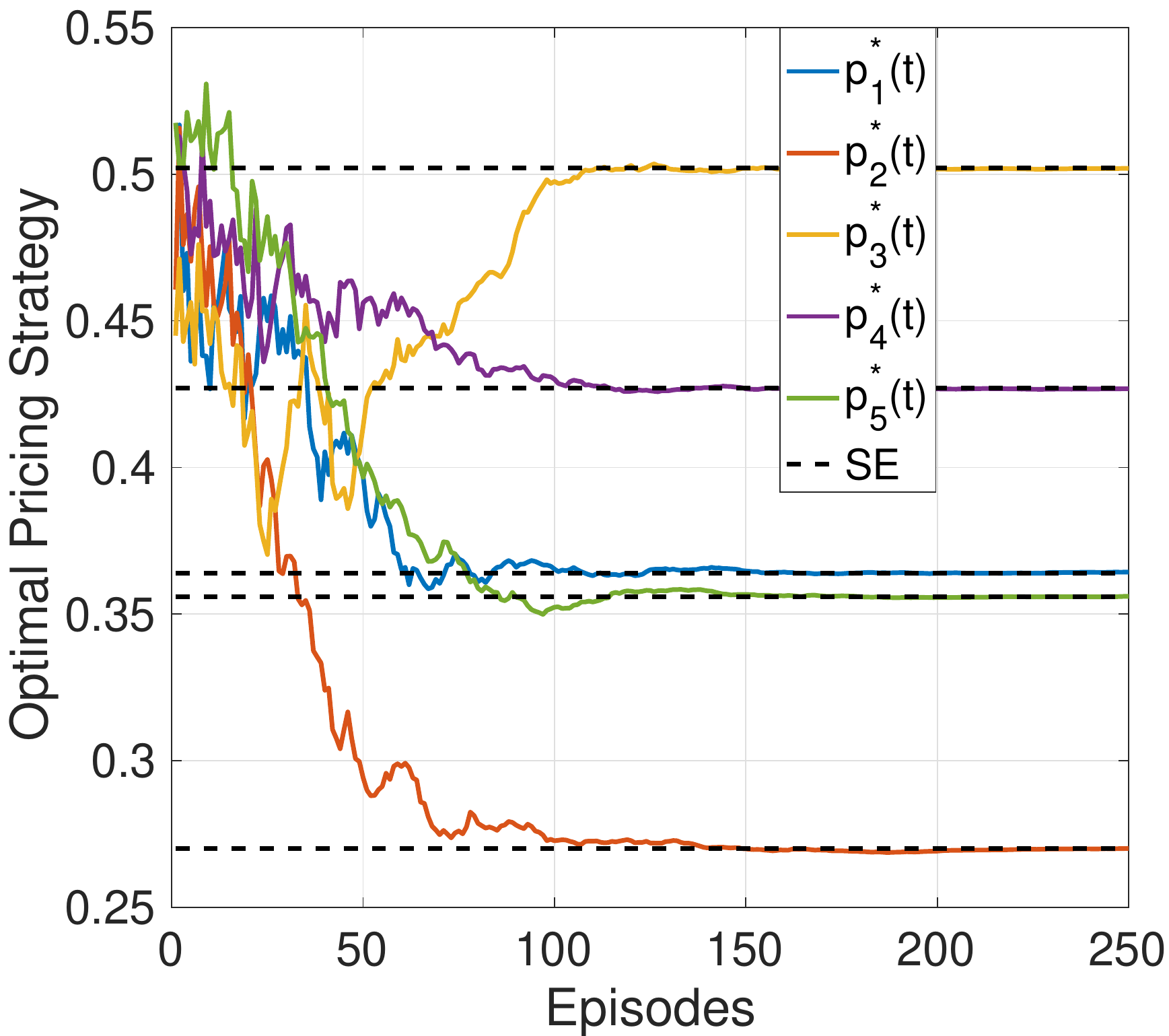}%
\label{fig_6-2-1-1}}
\hfil
\subfloat[Optimal resource allocation strategy.]{\includegraphics[width = 0.25\linewidth]{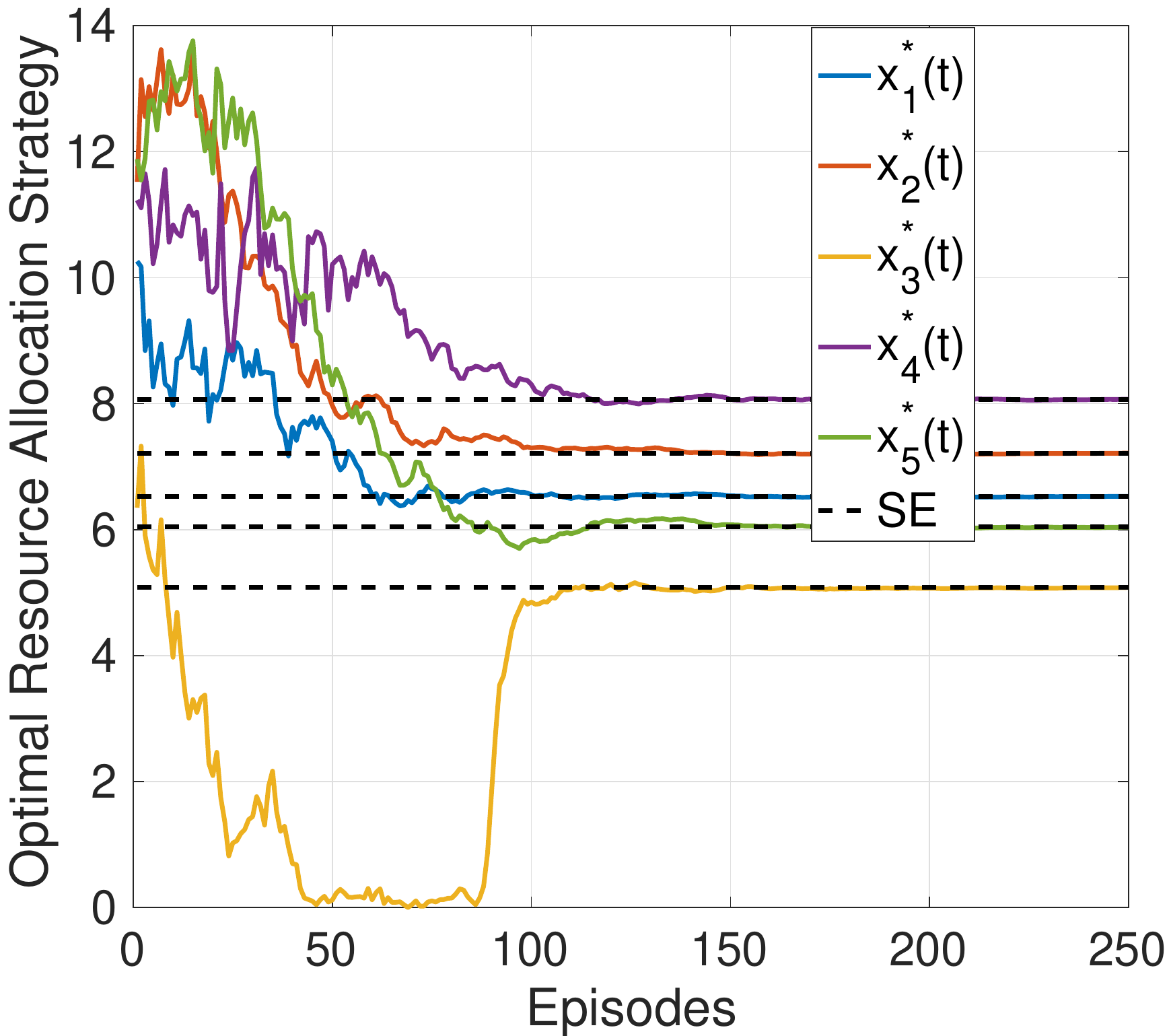}%
\label{fig_6-2-1-2}}
\hfil
\subfloat[SP's payoff.]{\includegraphics[width = 0.25\linewidth]{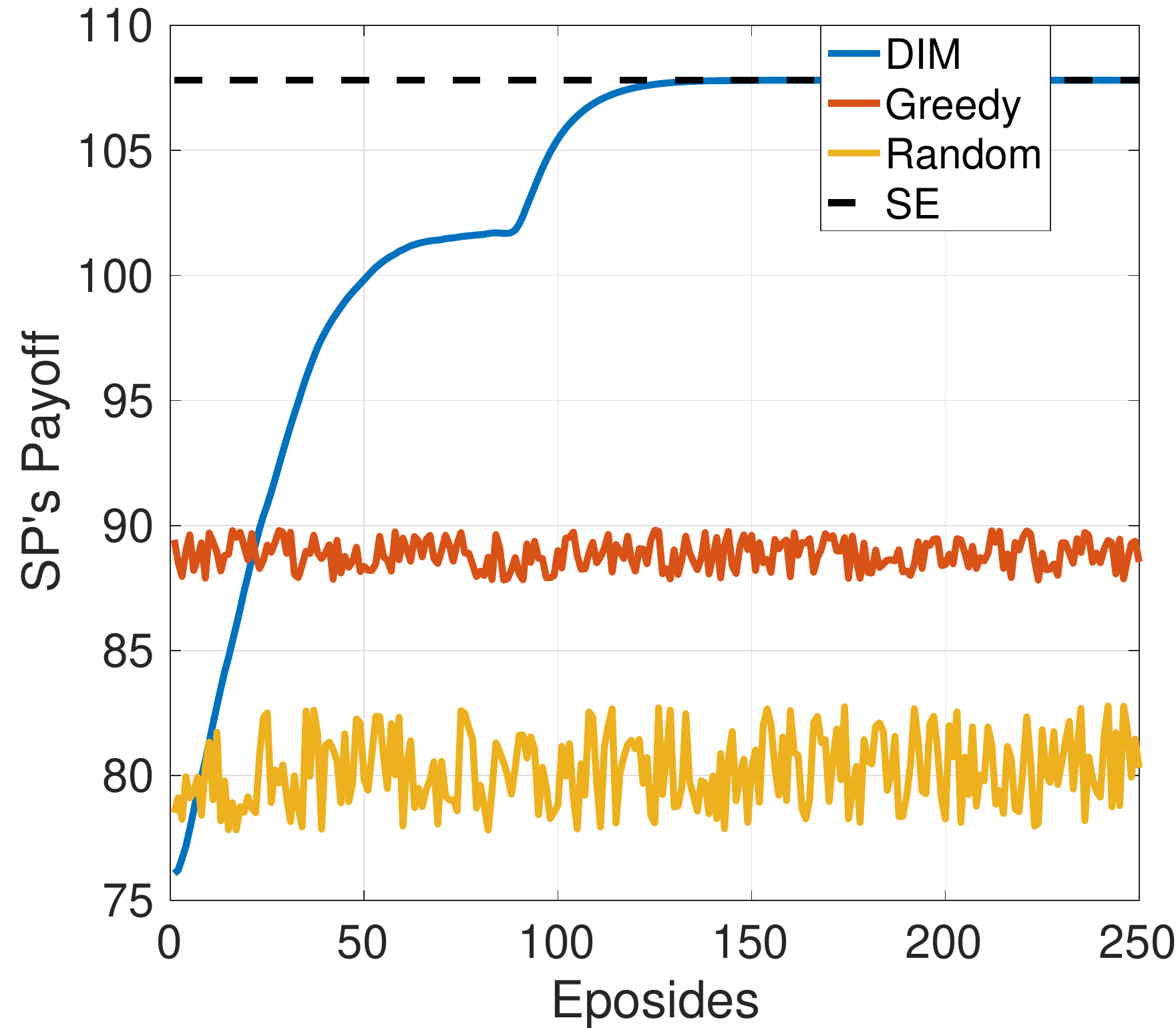}%
\label{fig_6-2-2-1}}
\hfil
\subfloat[MUs' payoff.]{\includegraphics[width = 0.25\linewidth]{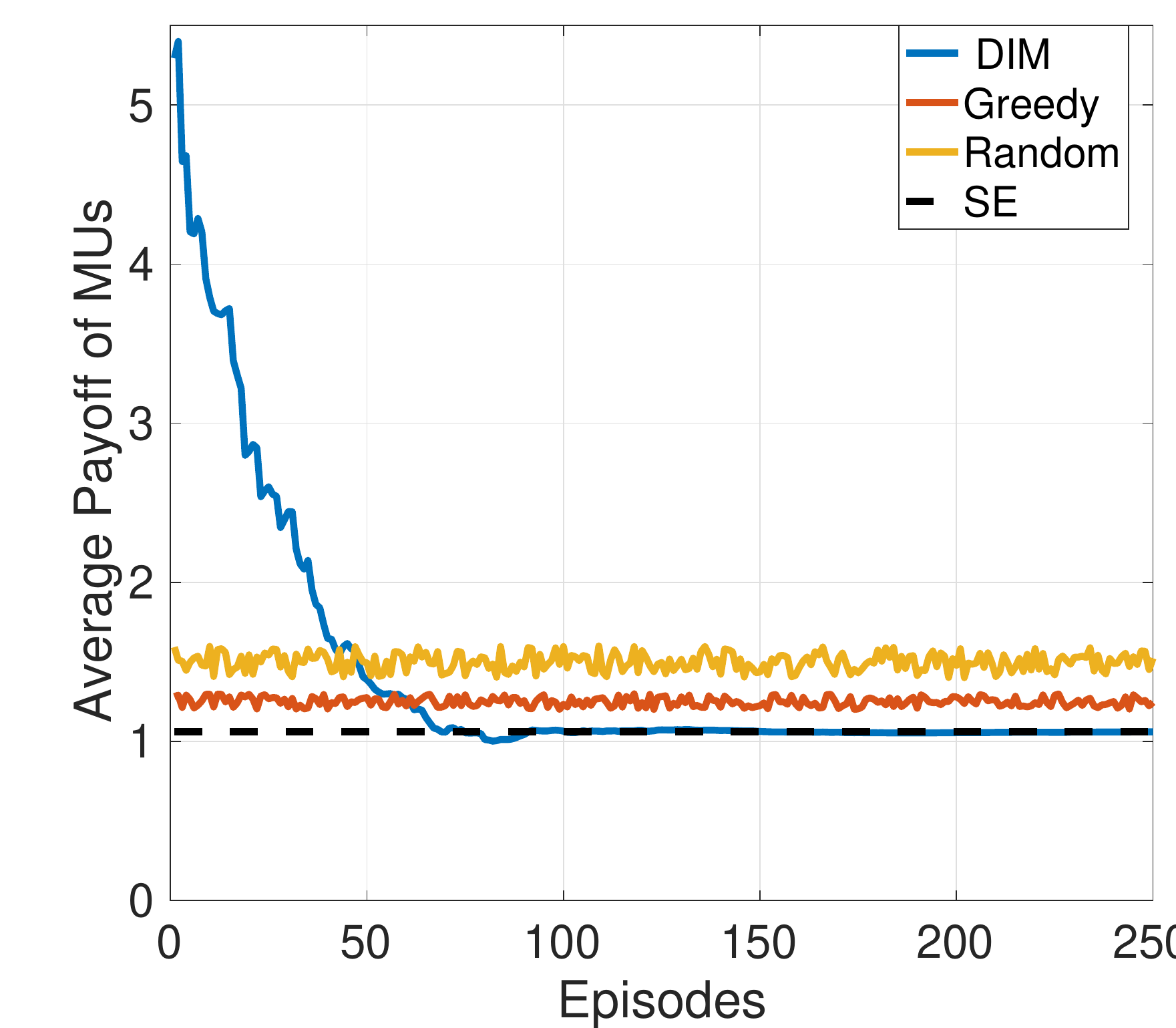}%
\label{fig_6-2-2-2}}
\caption{Performance under dynamic crowdsensing game.}
\label{fig_6-2-1}
\end{figure*}

%\begin{figure*}[!ht]
%\centering
%\subfloat[SP's payoff under dynamic crowdsensing game.]{\includegraphics[width = 0.4\linewidth]{Fig/DRL/SP_payoff.eps}%
%\label{fig_6-2-2-1}}
%\hfil
%\subfloat[MUs' payoff under dynamic crowdsensing game.]{\includegraphics[width = 0.4\linewidth]{Fig/DRL/MUs_payoff.eps}%
%\label{fig_6-2-2-2}}
%\caption{Performance of dynamic crowdsensing game.}
%\label{fig_6-2-2}
%\end{figure*}

\subsection{System Performance}\label{sec_6-2}
Fig.~\ref{fig_6-1-1} shows the system performance under the dynamic crowdsensing game when MUs have different $\delta_n$. In this group of simulation, $c_n = 0$, while $\delta_n$ is randomly chosen from $(0,1]$. From Fig.~\ref{fig_6-1-1}, it can be observed that there is a unique \textbf{SE} between SP and MUs. In Fig.~\ref{fig_6-1-1}, the bar charts denote $\delta_n$. As shown in Fig.~\ref{fig_6-1-1-2}, the $x_n^*$ decreases with $\delta_n$, which implies that the MUs with lower $\delta_n$ will spend more resources to serve the SP. As the MU with lower $\delta_n$, spending resources for serving itself will create little benefit. Therefore, the SP only needs a lower price (as shown in Fig~\ref{fig_6-1-1-1}) can employ more sensing resources from the MUs with lower $\delta_n$. Obviously, this is also in line with the laws of market economy. In Fig.~\ref{fig_6-1-1}, the optimal pricing strategy determined by the SP and the optimal individual resources allocation strategy of each MU have a slow increase with larger value of $\lambda$. It is because the SP with larger gained utilities will pay higher price to MUs and get more sensing resources from MUs, aiming to obtain more payoff.
\begin{figure*}[!ht]
\centering
\subfloat[Optimal pricing strategy profile vs $\delta_n$.]{\includegraphics[width = 0.4\linewidth]{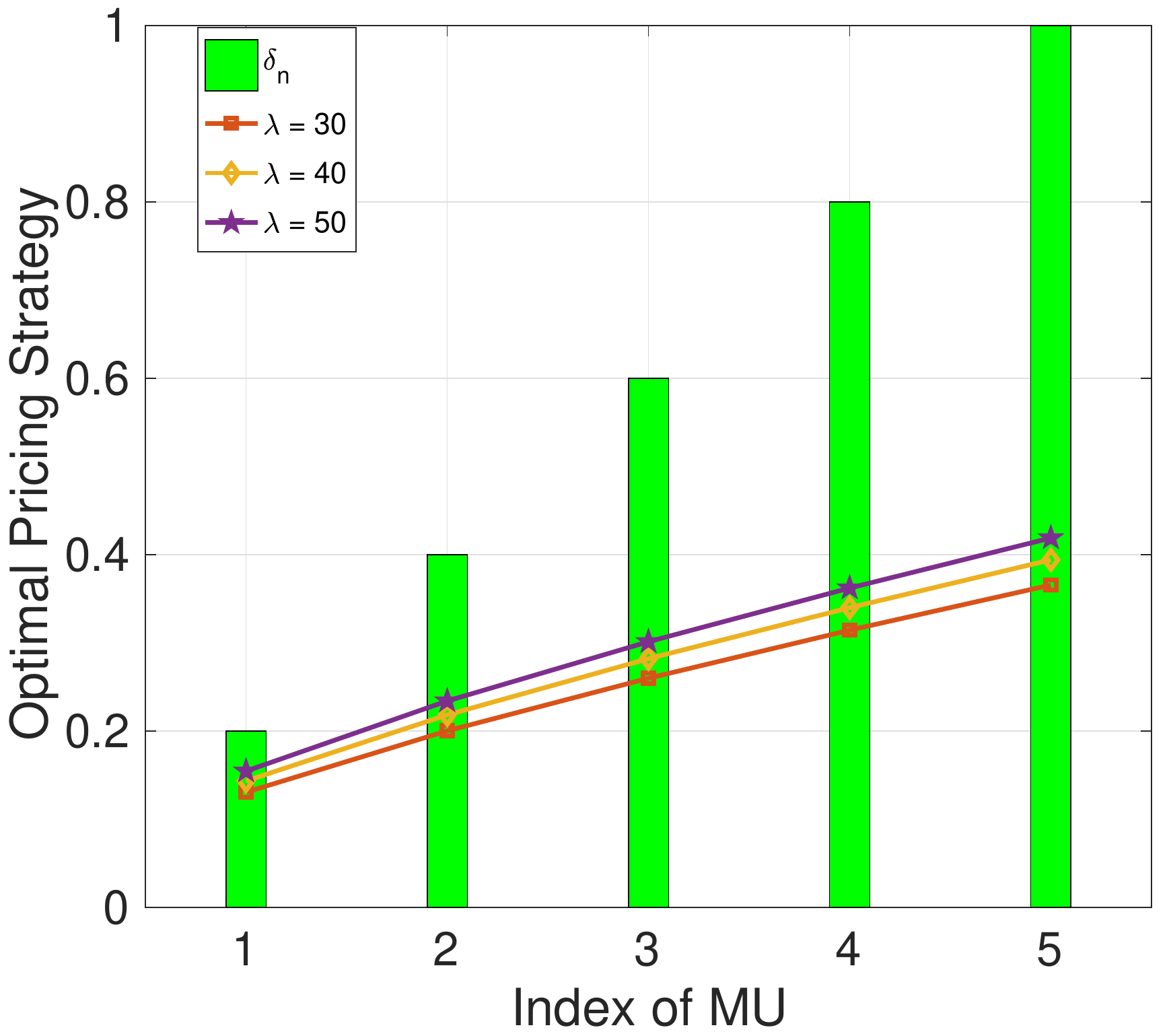}%
\label{fig_6-1-1-1}}
\hfil
\subfloat[Optimal resource allocation strategy profile vs $\delta_n$.]{\includegraphics[width = 0.4\linewidth]{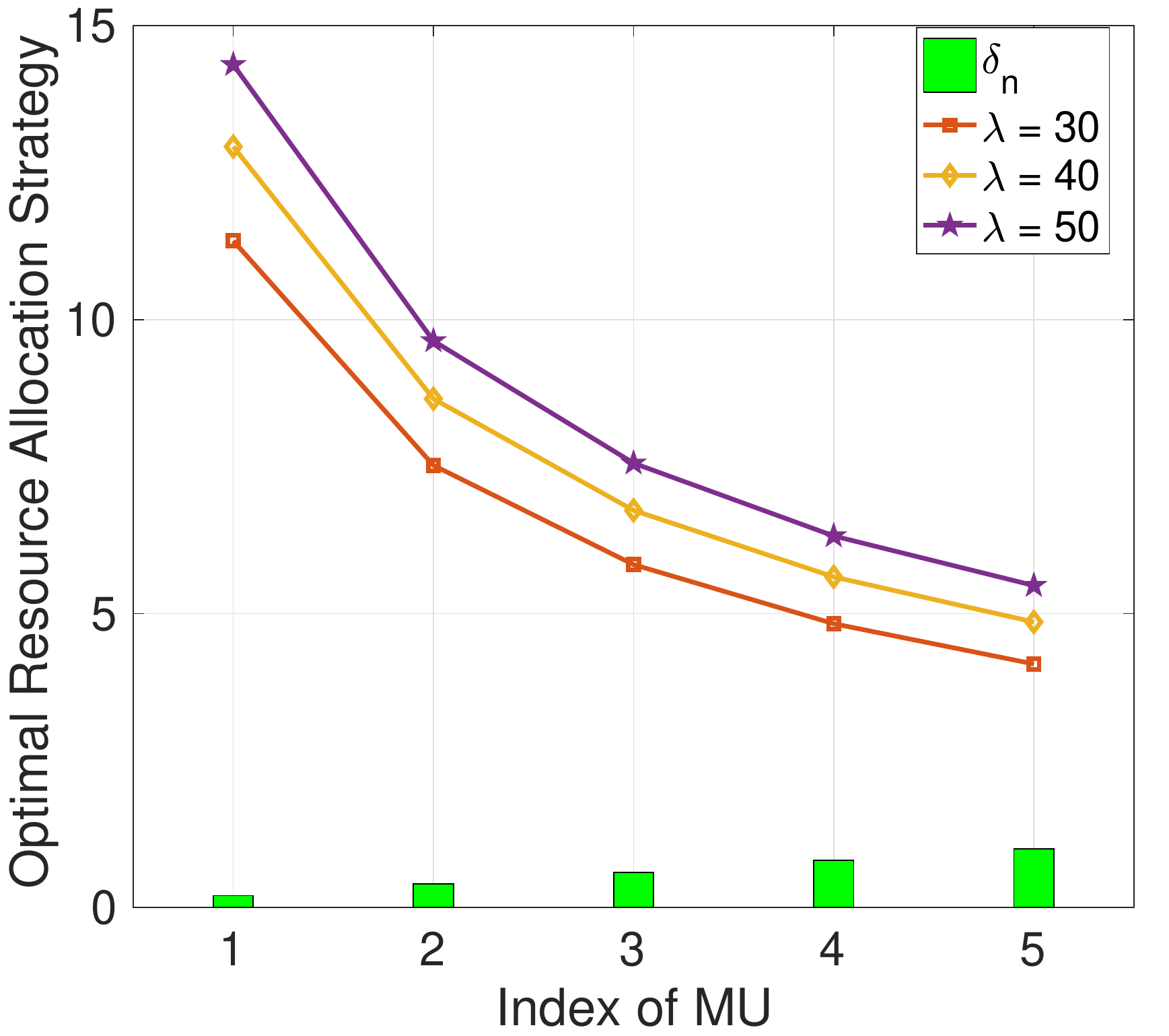}%
\label{fig_6-1-1-2}}
\caption{\textbf{SE} under different $\delta_n$.}
\label{fig_6-1-1}
\end{figure*}

Fig.~\ref{fig_6-1-2} illustrates the \textbf{SE} of MUs with the different $c_n$ under dynamic game. In this group of simulations, $\delta_n=1$, while $c_n$ is randomly chosen from $[0,1)$. We can obtain from this figure that under this setting, there is also a unique \textbf{SE} between SP and MUs. Fig.~\ref{fig_6-1-2-1} shows that under the same $\lambda$, $p_n^*$ will increase with $c_n$, this is because the SP needs to pay the MUs with price higher than the cost $c_n$, or the MUs will not participate in the MCS. Fig.~\ref{fig_6-1-2-2} shows that under the same $\lambda$, the SP will recruit more sensing from the MUs with smaller $c_n$, this is because the SP recruits more sensing resources form MU with smaller $c_n$ will take a little overhead. Also as has mentioned above, when the $\lambda$ is increasing, the MUs with same $c_n$ and $\delta_n$ will allocate more sensing resources to the SP. Meanwhile, the SP will increase the price $p_n^*$.

\begin{figure*}[!ht]
\centering
\subfloat[Optimal pricing strategy profile vs $c_n$.]{\includegraphics[width = 0.4\linewidth]{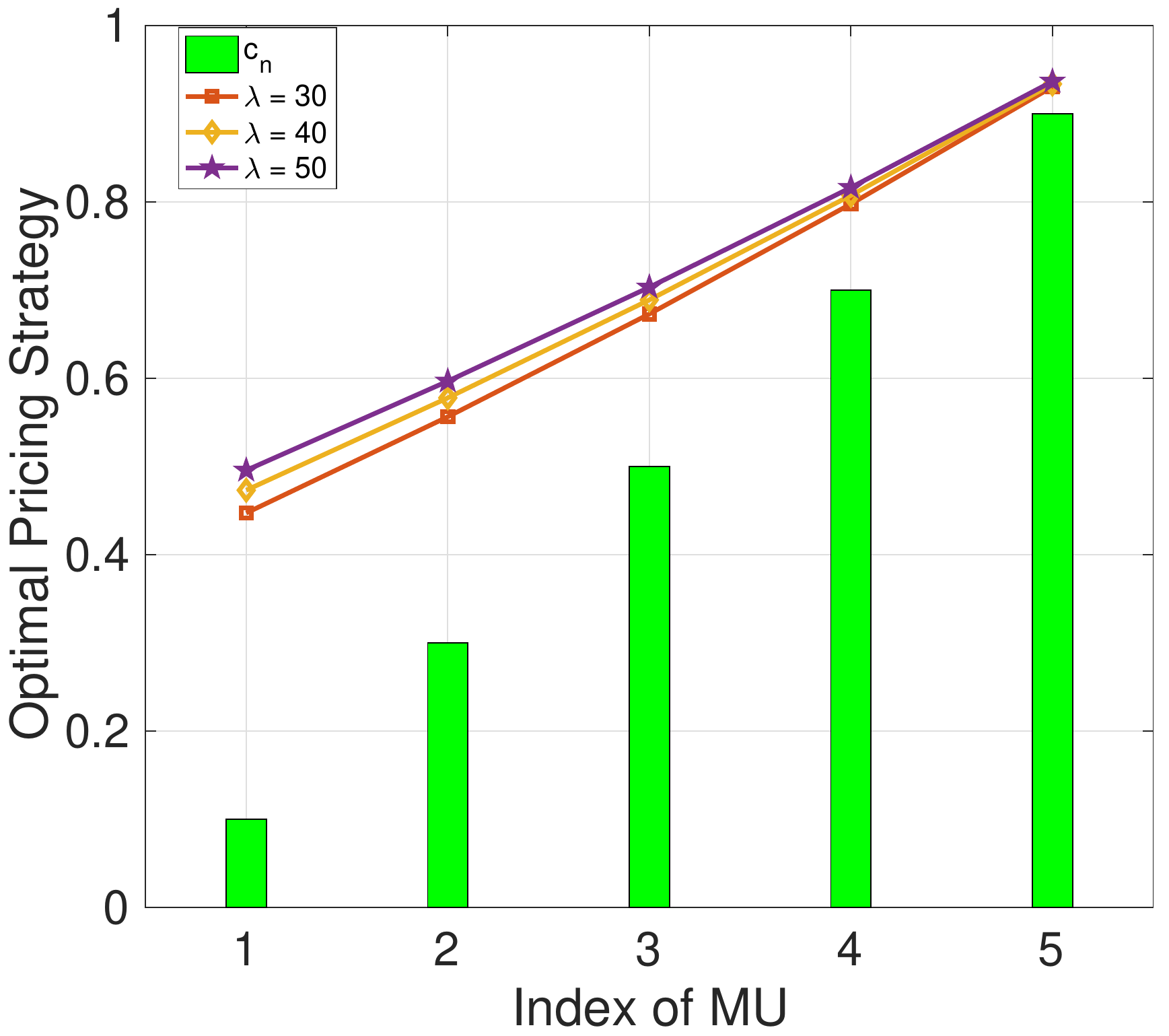}%
\label{fig_6-1-2-1}}
\hfil
\subfloat[Optimal resource allocation strategy profile vs $c_n$.]{\includegraphics[width = 0.4\linewidth]{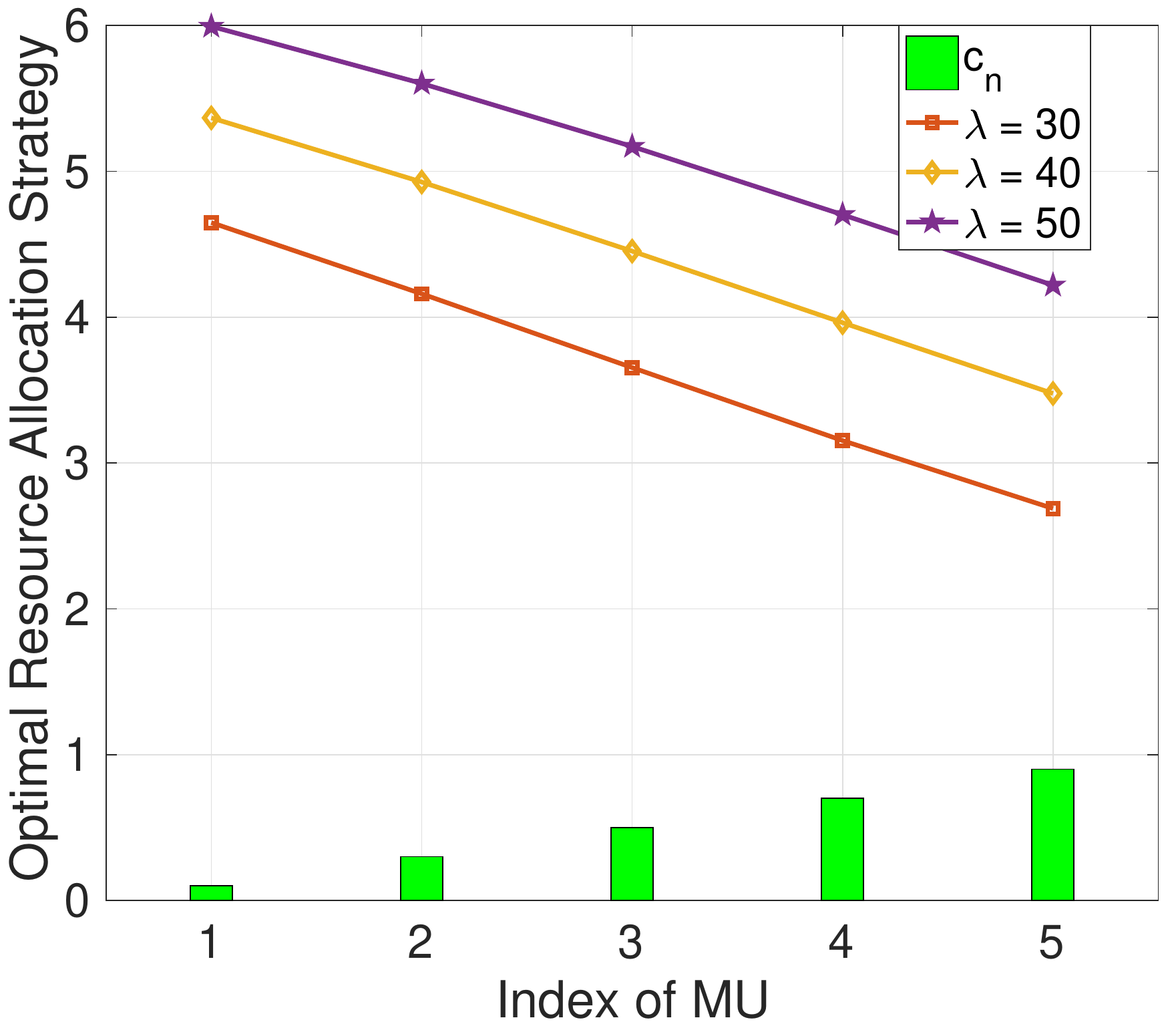}%
\label{fig_6-1-2-2}}
\caption{\textbf{SE} under different $c_n$.}
\label{fig_6-1-2}
\end{figure*}

Fig.~\ref{fig_6-1-3} shows the system performance under the impact of MUs' upper demand. In this group of simulation, we set $\delta_n$ and $c_n$ randomly select from $[0,1]$ and $\delta_n > c_n$, $\lambda=30$ and $\overline {\xi_n}$ varies from $20$ to $30$. The simulation results show that the demand uncertainties have an significant impact on the system performance. When $\overline {\xi_n}$ is higher, its means that the MUs are more expect to use their smart devices to service themselves. In this case, if the SP wants to recruit the MUs to participate in the MCS, it needs to pay more. Fig.~\ref{fig_6-1-3-1} shows that when $\overline {\xi_n}$ is increasing, SP needs to increase the sensing price. Fig.~\ref{fig_6-1-3-2} shows that under this condition, MUs will distribute less resources to participate in the MCS, and left more resources to serve themselves. Obviously, in this case, the payoff of the SP will decrease with $\overline {\xi_n}$ increasing.
\begin{figure*}[!ht]
\centering
\subfloat[Optimal pricing strategy profile vs $\overline {\xi_n}$.]{\includegraphics[width = 0.3\linewidth]{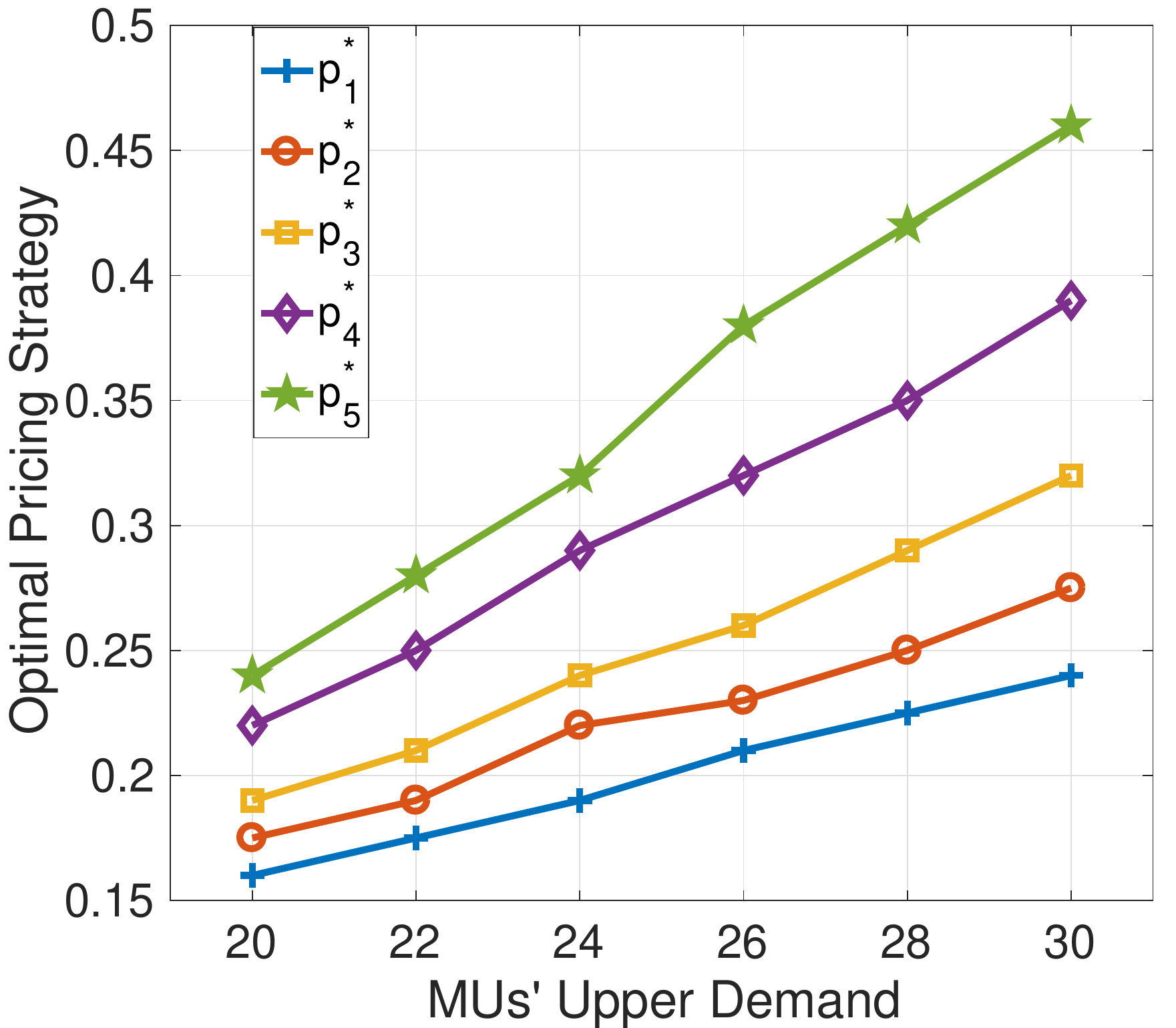}%
\label{fig_6-1-3-1}}
\hfil
\subfloat[Optimal resource allocation strategy profile vs $\overline {\xi_n}$.]{\includegraphics[width = 0.3\linewidth]{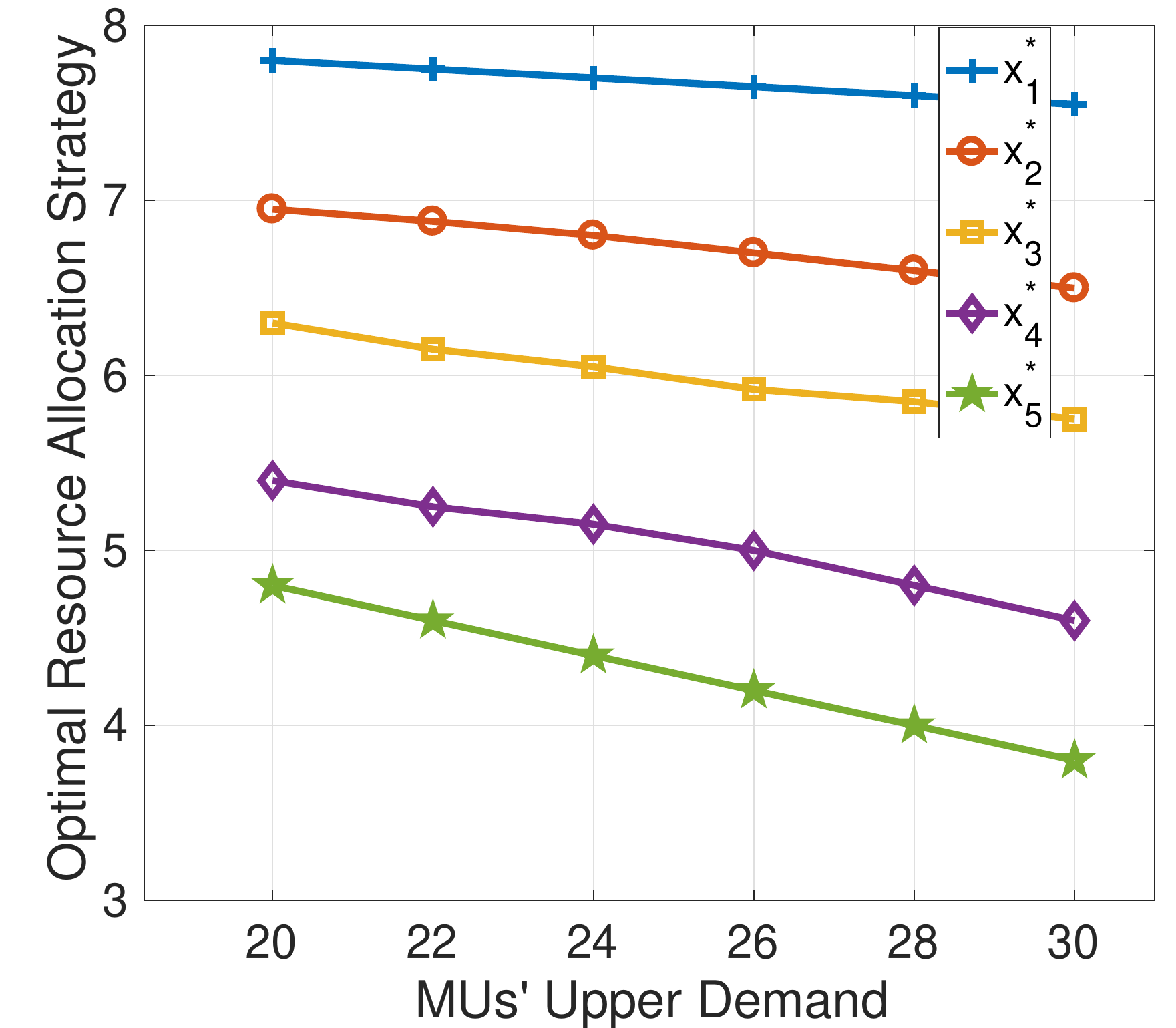}%
\label{fig_6-1-3-2}}
\hfil
\subfloat[SP's payoff vs $\overline {\xi_n}$.]{\includegraphics[width = 0.3\linewidth]{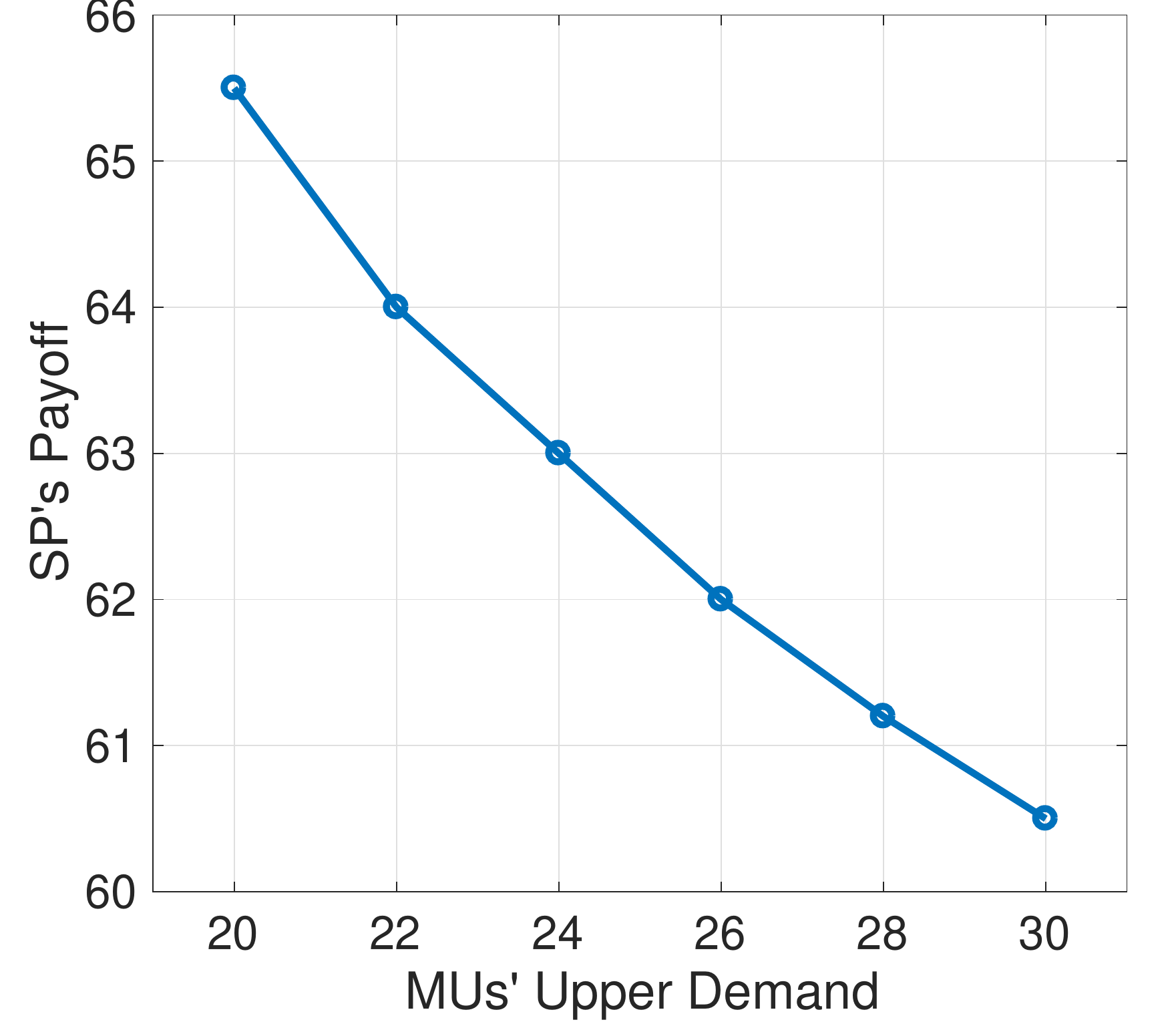}%
\label{fig_6-1-3-3}}
\caption{MCS system performance under the impact of MUs' upper demand.}
\label{fig_6-1-3}
\end{figure*}
\section{Conclusion}\label{sec_7}
In this paper, the static MCS game with MUs' resources constraints and demand uncertainties is formulated firstly, the incentive mechanism is then considered based on a Stackelberg game. The existence of the unique \textbf{SE} is proved and the expressions for calculating the \textbf{SE} are provided. By analyzing the \textbf{SE}, it is found that the MUs' demand uncertainties have evident impacts on the performance of the MCS system. Moreover, considering that the SP requires the MUs' private information to achieve the \textbf{SE} in the static mobile crowdsensing game, a dynamic DRL-based MCS system is proposed, where the SP can obtain the optimal pricing strategy without any prior knowledge of the MUs' information. Therefore, not only MUs can be promoted to participate in the dynamic mobile crowdsensing game, but also the private information of MUs can be kept. Finally, simulation results illustrate that the effectiveness of the proposed mechanism and approach. In future, crowdsensing game with different sensing quality and faked sensing attacks will be researched and the robustness of the DRL approach for private information protection will be enhanced.
\section*{Acknowledgment}
This work was supported by the Beijing Natural Science Foundation under Grant 4161001, the National Natural Science Foundation Projects of International Cooperation and Exchanges under Grant 61720106010, the Foundation for Innovative Research Groups of the National Natural Science Foundation of China under Grant 61621063, the National Natural Science Foundation of China 61572347, the US National Science Foundation (CNS-1319915 and CNS-134335), and the U.S. Department of Transportation Center for Advanced Multimodal Mobility Solutions and Education.

\end{document}